\documentclass[12pt]{article}
\usepackage[sort&compress,square,comma,numbers]{natbib}

\usepackage{hyperref}
\hypersetup{colorlinks=true,urlcolor=blue,linkcolor=blue,citecolor=blue}

\usepackage{amsmath}
\usepackage{amsthm}
\usepackage{amssymb}
\usepackage{amsmath}
\usepackage{vmargin}

\usepackage{arvin}
\usepackage{nickstyle_subset}

\setpapersize[portrait]{USletter}

\setmarginsrb{0.9in}{0.6in}{0.9in}{0.7in}{12pt}{12pt}{12pt}{16pt}

\title{Explicit Orthogonal Arrays and Universal Hashing \\ with Arbitrary Parameters}

\author{Nicholas Harvey \qquad Arvin Sahami}
\date{}

\newcommand{\SU}[1]{$#1$-independent}
\newcommand{\Cnstr}{\textsc{Constructor}\xspace}
\newcommand{\Eval}{\textsc{Eval}\xspace}
\newcommand{\Hash}{\textsc{Hash}\xspace}
\newcommand{\CnstrNoSpace}{\textsc{Constructor}}
\newcommand{\EvalNoSpace}{\textsc{Eval}}
\newcommand{\HashNoSpace}{\textsc{Hash}}

\newcommand{\Uniform}{\textsc{Uniform}}
\newcommand{\Special}{\beta}
\newcommand{\poly}{\operatorname{poly}}

\begin{document}

\maketitle

\begin{abstract}
Orthogonal arrays are a type of combinatorial design that were developed in the 1940s in the design of statistical experiments.
In 1947, Rao proved a lower bound on the size of any orthogonal array, and raised the problem of constructing arrays of minimum size.
Kuperberg, Lovett and Peled (2017) gave a non-constructive existence proof of orthogonal arrays whose size is near-optimal (i.e., within a polynomial of Rao's lower bound), leaving open the question of an algorithmic construction.
We give the first explicit, deterministic, algorithmic construction of orthogonal arrays achieving near-optimal size for all parameters. 
Our construction uses algebraic geometry codes.

In pseudorandomness, the notions of $t$-independent generators or $t$-independent hash functions are equivalent to orthogonal arrays.
Classical constructions of \SU{t} hash functions are known when the size of the codomain is a prime power, but very few constructions are known for an arbitrary codomain.
Our construction yields algorithmically efficient \SU{t} hash functions for arbitrary domain and codomain.
\end{abstract}

\section{Introduction}

Orthogonal arrays are a concept in the design of statistical experiments, first proposed by C.~R.~Rao in the 1940s.
A detailed exposition of this subject can be found in reference books \cite{Hedayat} \citep{StinsonBook}.
An orthogonal array with parameters $[s,m,n,t]$ is a matrix with $s$ rows, $m$ columns and entries in $[n]=\set{1,\ldots,n}$ such that, for every set of $t$ columns, those columns contain each tuple in $[n]^t$ exactly $s/n^t$ times.
A long history of research considers the problem of, given $m$, $n$ and $t$, finding an orthogonal array with small $s$.

If $n$ is a prime power, a classical construction due to Bush\footnote{This is essentially equivalent to a Reed-Solomon code, and predates it by eight years.}
in 1952 \cite[Theorem 3.1]{Hedayat} \cite[\S 3]{BushPolynomials} gives an orthogonal array with exactly $s = n^t$ rows if $m \leq n$.
This is exactly optimal (the definition requires that $s/n^t$ is a positive integer). 
For $m > n$, a simple modification of this construction has $s \leq (nm)^t$ rows; however this is no longer optimal.

Indeed, a lower bound of Rao\footnote{
For the case $n=2$, this can be found in \cite[Section 8]{Chor}.
In the special case of linear orthogonal arrays, Rao's bound is equivalent to the dual of the generalized Hamming bound on codes, and predates it by three years.
}
\cite[Theorem 2.1]{Hedayat} implies that every orthogonal array has
\begin{equation}
\EquationName{RaoBound}
s ~\geq~ \binom{m}{\floor{t/2}} (n-1)^{\floor{t/2}}
 ~\geq~ (cmn/t)^{c t},
\end{equation}
for some constant $c>0$.
In contrast, Bush's upper bound lacks the denominator of $t$, and only applies when $n$ is a prime power.
In general, there are very few constructions of orthogonal arrays when $n$ is not a prime power; 
see \cite[Table 12.6]{Hedayat}.
\begin{quote}
Some important problems are the constructions of these arrays with the optimum values of $m$ for a given $t$ and $s$. \\
\mbox{}\hspace{1cm} ~Rao, 1947 \cite{Rao1947}

\vspace{6pt}

\textbf{Research Problem 2.33.} For fixed values of $n$ and $t$, and large $m$, how far are
the Rao bounds from the truth? \\
\mbox{}\hspace{1cm}
Hedayat, Sloane, Stufken 1999
\cite{Hedayat}

{\footnotesize
\textit{(Note: notation has been adjusted to match ours.)}}
\end{quote}

\paragraph{Previous constructions.}
One might imagine that a simple product construction can reduce the general case to the prime power case.
This is possible (and was known to Bose in 1950 \cite[Section 2]{Bush}) but it is suboptimal.
First factor $n$ into a product of prime powers $\prod_{\ell \leq d} p_\ell^{a_\ell}$.
For each $\ell$, build an orthogonal array $A^{(\ell)}$ which in which $n$ is taken to be $p_\ell^{a_\ell}$.
Then, define $A$ to be an entry-wise Cartesian product of $A^{(1)},\ldots,A^{(d)}$.
This gives an orthogonal array, but it is unfortunately somewhat large.
Applying Rao's lower bound separately to each $A^{(\ell)}$, the number of rows of $A$ is at least
$
\prod_{\ell \leq d} (c m p_\ell^{a_\ell}/t)^{ct} 
= (cm/t)^{c d t} n^{ct}.
$
This does not asymptotically match Rao's lower bound for all parameters due to the factor $d$ in the exponent.
So this construction cannot in general prove that Rao's lower bound is tight.
Note that $d$ can be $\Omega(\log(n)/\log \log n)$ when $n$ is the product of the first $d$ primes.

If $n$ is a prime power, then it is known how to explicitly construct orthogonal arrays with $s \leq (C m n / t)^{Ct}$ for some constant $C$.
We will call such a construction \textit{near-optimal}, meaning that it matches Rao's lower bound up to the value of the constant $C$.
The anonymous reviewers of this manuscript have informed us of this result, which is stated in \cite[Theorem 3.4]{CL21} using the language of pseudorandom generators.
That construction, like our results in \Section{AG}, is based on the use of algebraic geometry codes. 

If $n$ is not restricted to be a prime power, then
Kuperberg, Lovett and Peled~\cite{KLP17} give a \emph{non-explicit} construction, for all $m,n,t$, of orthogonal arrays that are also near-optimal, so $s \leq (Cmn/t)^{Ct}$ for some constant $C$.
However their approach is randomized, and they only prove an exponentially small lower bound on the probability of success. 
One of their open questions is whether an algorithmic version of their construction exists \cite[pp.~968]{KLP17}.

\paragraph{Our results.}
We also give a \textit{near-optimal} construction of orthogonal arrays.
Our construction is explicit, algorithmic, works for all $m, n, t$, and is apparently unrelated to the construction of Kuperberg, Lovett and Peled.

\begin{theorem}[Informal]
\TheoremName{Main}
For any $m,n,t$ (with $n$ not necessarily a prime power), there is an explicit description of an orthogonal array with parameters $[s,m,n,t]$ where $s=(cmn/t)^{35t}$, and $c$ is a universal constant.
The array can be constructed by a deterministic algorithm with runtime $\poly(sm)$.
\end{theorem}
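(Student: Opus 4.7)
The plan is to build the construction in two stages: first handle the prime power case using algebraic geometry codes, then reduce the general case to the prime power case in a way that avoids the factor-$d$ blowup of the naive CRT-based product construction (where $d$ is the number of distinct prime factors of $n$).

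For the prime power case, I would invoke the AG-code construction indicated in \cite{CL21}. Fix a curve $X$ over $\mathbb{F}_n$ from a Garcia--Stichtenoth-like tower, so that $X$ has $N$ rational points and genus $g$ with $N/g$ bounded below by roughly $\sqrt{n}-1$. For a divisor $D$ of degree $\Theta(t)$, the Riemann--Roch space $L(D)$ has dimension $\Theta(t)$, and the evaluation map at the rational points gives a linear code whose dual distance is at least $t+1$. A uniformly random codeword then yields a $t$-wise independent distribution over $\mathbb{F}_n^m$ (for $m \le N$) with support size $|L(D)| = n^{\Theta(t)} \leq (Cmn/t)^{Ct}$, which is exactly a near-optimal OA with prime power alphabet. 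The $m < N$ restriction is overcome by standard tower constructions allowing $N$ polynomial in $n$.

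For arbitrary $n$, I would factor $n = \prod_{\ell=1}^d q_\ell$ into prime power factors $q_\ell$ for distinct primes. By CRT, a $t$-wise independent hash family $[m] \to [n]$ is equivalent to a $d$-tuple of jointly $t$-wise uniform hash families $h_\ell : [m] \to [q_\ell]$. Taking the $h_\ell$ to be independent prime-power AG constructions gives total size $\prod_\ell (Cmq_\ell/t)^{Ct} = (Cm/t)^{Cdt} n^{Ct}$, losing a factor $d$ in the exponent. To avoid this, the plan is to correlate the $h_\ell$'s using shared structure arising from a single master object: for instance, build one AG-code hash $H: [m] \to [Q]$ over a conveniently chosen prime power $Q$ (e.g., a prime in $[n, 2n]$ by Bertrand), and derive the $d$ component hashes by coordinated projections into each $[q_\ell]$.

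The main obstacle will be the alphabet reduction from $[Q]$ to $[n]$ while preserving exact $t$-wise independence. A uniform surjection $[Q] \to [n]$ exists only when $n \mid Q$, and when $Q$ is a prime power this essentially forces $n$ to be a power of the same prime, which is precisely the case we wish to escape. I expect that the actual construction overcomes this by some combination of: (i) hashing into a larger domain $[Q]^k$ with $k$ chosen so that $n \mid Q^k$ (which still fails unless every prime of $n$ divides $Q$, so requires further modification); (ii) padding the hash output with an independent low-cost source of $[n]$-valued randomness to correct the distribution exactly; or (iii) a direct algebraic construction over a ring adapted to $n$ rather than a field. The constant $35$ in the theorem likely reflects the cumulative constant-factor losses in the exponent incurred across the AG code construction, the projection step, and the CRT coordination.
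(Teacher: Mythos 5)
There is a genuine gap: you have correctly isolated the crux of the problem --- no uniform surjection $[Q]\to[n]$ exists when $Q$ is a prime power and $n$ is not --- but none of your three proposed escapes resolves it, and the CRT-coordination idea you lead with is exactly the route the paper rules out (the entrywise product of per-prime-power arrays inherently costs a factor $d$ in the exponent, and ``correlating'' the component hashes cannot help, since Rao's bound applies to each factor separately). Your options (i) and (iii) are not developed into anything that preserves exact $t$-wise uniformity, and option (ii), while closest in spirit to what actually works, is missing the two ideas that make it go through. The paper does not reduce to the prime-power-alphabet case at all. Instead it picks, via Linnik's theorem, a prime $p\equiv 1\pmod n$ (so $q=p$ or $q=p^2$ satisfies $q\equiv 1\pmod n$), and exploits that $\bF_q$ \emph{minus a single element} has size divisible by $n$: for any ``bad value'' $\beta$ there is an explicit $\frac{q-1}{n}$-to-$1$ map $\phi_\beta:\bF_q\setminus\{\beta\}\to[n]$. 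The second missing ingredient is how to control the coordinates where a codeword actually hits its bad value. The paper takes two nested codes $C_1\subsetneq C_2$ (dual distance of $C_1$ at least $t+1$, minimum distance $d_2$ of $C_2$ large) and lets the bad value at coordinate $j$ be $b_j$ for a fixed $b\in C_2\setminus C_1$; then every $u\in C_1$ agrees with $b$ in at most $\tau=m-d_2=O(t)$ coordinates, and those few entries are replaced by \emph{all} possible values in $[n]$ (with multiplicities $n^{\tau-\ell}$ to keep every codeword contributing equally many rows). This yields an exactly balanced $[s,m,n,t]$ array with $s=n^\tau q^{\dim C_1}$, i.e.\ only an $n^{O(t)}$ overhead, with no CRT step anywhere.

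On the prime-power leg of your plan, the AG-code sketch is essentially right in outline, but the quantitative statement you need for \emph{all} $m,n,t$ requires curves whose point counts are sufficiently dense as $m$ varies; the paper uses the modular curves $X_0(11\ell)$ over $\bF_{p^2}$ with $\ell$ a prime near $t$ (genus $g=\ell\le 2t$, at least $(p-1)(\ell+1)>m$ rational points) precisely to get both $g=O(t)$ and enough points for every target length, together with a $\poly(m)$-time generator-matrix construction. Without an argument of this kind, ``standard tower constructions'' do not by themselves give the claimed explicit, deterministic $\poly(sm)$-time construction for arbitrary parameters.
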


To prove this, we give an abstract construction based on coding theory.
In \Section{RS} we instantiate that construction with Reed-Solomon codes, obtaining a simple, easily implementable, deterministic construction with $s = O(m+n)^{6t}$.
In \Section{AG} we instantiate the construction instead with algebraic geometry codes to obtain a deterministic, near-optimal construction with $s = O(mn/t)^{35t}$.
In \Section{GV} we instantiate the construction instead with random linear codes to obtain a non-explicit construction that has near-optimal size, and can be efficiently constructed by a randomized algorithm.

\subsection{Hash Functions}

Hash functions of various sorts are crucial tools in pseudorandomness and randomized algorithms.
In this work we focus on \emph{\SU{t}} functions.
Let $h : [m] \rightarrow [n]$ be a random function with some distribution.
We say that $h : [m] \rightarrow [n]$ is
\SU{t} if
\begin{align*}
&\prob{ h(a_1)=\alpha_1 \wedge\cdots
\wedge h(a_t)=\alpha_t } ~=~ \frac{1}{n^t} \\
&\text{$\forall$~distinct~} a_1,\ldots,a_t \in [m],~\forall \alpha_1,\ldots,\alpha_t \in [n].
\end{align*}
(Obviously $t \leq m$ is necessary.)
This is equivalent to the random variables $\setst{ h(a) }{ a \in [m] }$ being uniformly distributed and $t$-wise independent.
If this property is satisfied by a function $h$ that is uniformly chosen from a multiset $\cH$, then we also say that $\cH$ is \SU{t}.
This property is also called \emph{\ensuremath{\text{strongly universal}_t}}, or simply \emph{strongly universal} if $t=2$.

Wegman and Carter \cite[Section 1]{WC81} gave a simple construction of a \SU{t} family $\cH$ when \textit{$n$ is a prime power}\footnote{
Carter and Wegman also defined the notion of \emph{weakly} universal hash functions, and gave a construction that allows $n$ to be an arbitrary integer \cite[Proposition 7]{CW79}.
} and $m=n$.
In this construction, $\cH$ corresponds to the polynomials of degree $t-1$ over the field $\bF_n$, so $\card{\cH}=n^t$ and the space required to represent a member of $\cH$ is $O(t \log n)$ bits.
In general, if $n$ \emph{is a prime power} and $m$ is arbitrary, then a modified construction has $\card{\cH} = (mn)^{O(t)}$, so members of $\cH$ can be represented in $O(t \log(mn))$ bits.

Orthogonal arrays and \SU{t} hash families are
mathematically equivalent notions, as was observed by Stinson\footnote{
Another connection between design theory and derandomization is in Karp and Wigderson \cite[\S 9]{karp1985fast}.}
\cite[Theorem 3.1]{Stinson94} \cite[Theorem
5.2]{StinsonSurvey}.

\begin{claim}
\ClaimName{Equiv}
Let $M$ be an $s \times m$ matrix with entries in $[n]$.
Let $\cH = \set{ h_1,\ldots,h_s }$ be a multiset of
$[m] \rightarrow [n]$ functions where $h_i(j) = M_{i,j} ~\forall j \in [m]$.
Then $\cH$ is \SU{t} iff $M$ is a $[s,m,n,t]$ orthogonal array.
\end{claim}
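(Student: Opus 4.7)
The plan is to prove both directions by a direct counting argument that relates, for any fixed choice of $t$ distinct columns and $t$ target values, the probability appearing in the definition of \SU{t} to the multiplicity count appearing in the definition of an orthogonal array.

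First I would fix arbitrary distinct $a_1,\ldots,a_t \in [m]$ and arbitrary $\alpha_1,\ldots,\alpha_t \in [n]$, and define
\[
N(a_1,\ldots,a_t;\alpha_1,\ldots,\alpha_t) ~=~ \card{\setst{ i \in [s] }{ M_{i,a_k}=\alpha_k ~\forall k \in [t]}}.
\]
Since $h$ is drawn uniformly from the multiset $\cH$ and $h_i(a_k) = M_{i,a_k}$ by construction,
\[
\prob{h(a_1)=\alpha_1 \wedge \cdots \wedge h(a_t)=\alpha_t} ~=~ \frac{N(a_1,\ldots,a_t;\alpha_1,\ldots,\alpha_t)}{s}.
\]
This identity is the entire content of the claim; everything else is rephrasing definitions.

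Next I would argue both directions. If $\cH$ is \SU{t}, then the left side above equals $1/n^t$ for every choice of distinct columns and every tuple of values, so $N = s/n^t$ for every such choice, which is exactly the defining property of an $[s,m,n,t]$ orthogonal array (note that $s/n^t$ is automatically a non-negative integer since it equals a cardinality). Conversely, if $M$ is an $[s,m,n,t]$ orthogonal array, then $N = s/n^t$ for every such choice, so the probability above equals $1/n^t$, giving \SU{t}.

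There is no real obstacle here; the proof is essentially a tautology, and the only thing to be mildly careful about is matching the quantifiers (``distinct $a_1,\ldots,a_t$'' on the hashing side corresponds to ``any set of $t$ columns'' on the array side, and the requirement that $s/n^t$ be a positive integer follows automatically from $N$ being a count). I would write the argument as a single short paragraph producing the displayed identity and then deriving both implications from it in two sentences.
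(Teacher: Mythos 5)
Your proposal is correct and matches the paper's treatment: the paper states that the claim is immediate from the definitions, and your counting identity is exactly the routine unwinding of those definitions. Nothing further is needed.
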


The proof is immediate from the definitions.
Thus Wegman and Carter's construction of \SU{t} hash functions is identical to Bush's 1952 construction of orthogonal arrays.

There are some differences in the goals of the research communities working on orthogonal arrays and \SU{t} hashing.
One difference is computational.
Researchers in orthogonal arrays tend to be interested in showing existence, or explicit construction, of the entire matrix $M$. 
In contrast, researchers in hash functions are additionally interested in quickly sampling a hash function from $\cH$ and quickly evaluating the hash function.
The following theorem states our main result for hash functions.

\begin{theorem}[Informal]
\TheoremName{MainHash}
For any parameters $m,n,t$, there exists a \SU{t} hash family for which functions can be represented in $O(t \log(mn))$ bits, can be evaluated in time $O(t)$, and can be constructed in expected time $O(t + (n+m)^\eps)$ for any $\eps>0$. 
Assuming the generalized Riemann hypothesis (GRH), the expected construction time improves to $O(t) + \polylog(nm)$.
\end{theorem}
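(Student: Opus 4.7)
The plan is to apply Theorem \TheoremName{Main} together with Claim \ClaimName{Equiv}, and then to unpack the algebraic structure used to produce the orthogonal array in Section \Section{AG} in order to support fast sampling and fast evaluation.

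First, I would observe that Theorem \TheoremName{Main} produces a $[s,m,n,t]$ orthogonal array $M$ with $s \leq (cmn/t)^{35t}$, which by Claim \ClaimName{Equiv} is a \SU{t} hash family $\cH = \{h_1,\ldots,h_s\}$. A hash function is named by its row index $i \in [s]$, and since $\log_2 s = 35 t \log_2(cmn/t) + O(1) = O(t \log(mn))$ this matches the claimed representation length; sampling a uniformly random index is just $O(t \log(mn))$ random bits, i.e.\ $O(t)$ time in the word RAM model.

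For the $O(t)$ evaluation bound I would unpack the AG construction: each row of $M$ is a codeword of an algebraic geometry code, equivalently the vector of evaluations of a function $f_i$ drawn from a Riemann--Roch space of dimension $\Theta(t)$ at rational points $P_1,\ldots,P_m$ of an explicit curve over $\bF_q$, followed (when $n$ is not a prime power) by a fixed balanced fold $\bF_q \to [n]$. In a precomputed basis $b_1,\ldots,b_k$ with $k = \Theta(t)$, the function $f_i$ is determined by $k$ coefficients that can be read directly off the index $i$, so $h_i(j) = f_i(P_j)$ is an inner product of these coefficients with the $k$ values $b_\ell(P_j)$, which is $O(t)$ arithmetic operations. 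The values $b_\ell(P_j)$ are either precomputed during the setup stage or streamed out of the explicit recursive description of the curve in amortized $O(1)$ time per value.

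The main obstacle is the construction stage, where one must pick a prime $p$ of appropriate size to define the base field and then materialize enough of the curve to support $m$ rational points and Riemann--Roch dimension $\Theta(t)$. Prime finding dominates: combining randomized primality testing with an interval search yields expected time $(n+m)^\eps$ for any $\eps>0$ using the best unconditional prime-gap bound, while under GRH prime gaps in the relevant interval shrink to $\polylog(nm)$, giving the improved bound. The remaining curve-level bookkeeping (constructing the relevant basis and enumerating the first $m$ rational points) requires only an additional $\polylog(nm)$ arithmetic operations on top of prime finding, and sampling a random hash function on top of the setup costs $O(t)$ more, yielding the overall bounds claimed in the theorem.
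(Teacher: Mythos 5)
There is a genuine gap, and it sits exactly where the paper itself locates an open problem. You build the hash family from the AG-code array of \Theorem{Main} and assert that a hash value $h_i(j)$ can be computed in $O(t)$ time from $\Theta(t)$ Riemann--Roch coefficients, with the basis evaluations $b_\ell(P_j)$ ``either precomputed during the setup stage or streamed \ldots in amortized $O(1)$ time per value.'' Neither option is available within the claimed resources: precomputing these values means storing (or spending time proportional to) an $m \times \Theta(t)$ table over $\bF_q$, which violates both the $O(t\log(mn))$-bit space bound and the $O(t+(n+m)^\eps)$ construction time, while the streaming claim is unsupported and in any case incompatible with evaluating at a single arbitrary input $x\in[m]$ in time $O(t)$. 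The paper deliberately does \emph{not} use \Section{AG} for hashing; its proof of \Theorem{MainHash} is the Reed--Solomon-based \Algorithm{hash_alg} of \Section{Hash}, precisely because no AG code with poly$(t)$-time, low-space coordinate evaluation is known to the authors (open question 3). A second substantive error is the ``fixed balanced fold $\bF_q \to [n]$'': since $q \equiv 1 \pmod n$, no such balanced map exists on all of $\bF_q$, and the construction's correctness hinges on the bad-value machinery of \Lemma{BuildOAWorks} --- excluding $b_j$ in coordinate $j$ and replacing it, which in the hash implementation is done lazily with random values cached in a dictionary of size at most $t$ (possible for Reed--Solomon because $h(x)-x^t$ has degree $t$ and hence at most $t$ roots). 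Your proposal omits this mechanism entirely, and for AG codes even identifying the bad coordinates during evaluation is not obviously doable in time $O(t)$.

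The construction-time analysis is also off in a way that matters. The prime you need is not merely ``of appropriate size'': it must satisfy $p \equiv 1 \pmod{\eta}$ with $\eta = n\ceil{m/n}$, so the relevant input is the density of primes in an arithmetic progression up to $\eta^\nu$, not prime gaps in an interval. The paper gets the $O(t+(n+m)^\eps)$ expected time from a quantitative form of Linnik's theorem (\Theorem{q_linnik}, via Iwaniec--Kowalski and Siegel's theorem, with ineffective $c_\eps$), and the GRH improvement from the stronger density bound in \Corollary{IK} --- GRH here removes exceptional zeros for $L$-functions mod $\eta$, which is not the same statement as ``prime gaps shrink to $\polylog(nm)$.'' To repair the proposal you would essentially have to switch to the paper's route: random degree-$(t-1)$ polynomial over $\bF_p$ with $p \equiv 1 \pmod\eta$, evaluation by Horner in $O(t)$, bad value $b(x)=x^t$, lazy randomized replacement with the size-$t$ dictionary, and the progression-density argument for finding $p$.
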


In contrast, the Wegman-Carter construction requires that $n$ \emph{is a prime power}, also uses $O(t \log(nm))$ bits, also can be evaluated in time $O(t)$, and can be constructed in expected time $O(t) + \polylog(nm)$.
(If $m > n$, the construction must work in a field extension of $\bF_n$ of size at least $m$, so random sampling is used to find an irreducible polynomial.)   
Thus the efficiency of our hash function matches theirs, assuming GRH.

\subsection{Applications}

There are many algorithms that require \SU{t} hash functions for $t>2$. Examples include 
cuckoo hashing \cite{Kane},
distinct element estimation \cite{KNW}, MinHash \cite{Indyk},
and the leftover hash lemma, which is used to construct seeded extractors \cite{StinsonLeftover}, etc.
More recently, a mechanism for maximizing Nash social welfare in the percentage fee model uses \SU{t} hash functions where $n$ not a prime power \cite{dobzinski2023fairness}.

\section{The abstract construction using codes}
\SectionName{GenericConstruction}

In this section we present an abstract construction of orthogonal arrays, even when $n$ is not a prime power, using linear codes over finite fields.
The subsequent sections instantiate this construction using particular codes.
The Hamming distance between vectors $x$ and $y$ is denoted $\Delta(x,y) = \card{\setst{ i }{ x_i \neq y_i }}$.

\begin{lemma}
\LemmaName{BuildOAWorks}
Let $m,n,t$ be integers with $n \geq 2$ and $2 \leq t \leq m$.
Let $q$ be a prime power satisfying $q \equiv 1 \pmod n$.
Suppose that $C \subseteq \bF_q^{m}$ is a linear code of dimension $k$ whose dual has minimum distance at least $t+1$.
Let $b \in \bF_q^{m}$ satisfy $\Delta(b,u) \geq m-\tau$ for all $u \in C$.
Then \textsc{BuildOA}, shown in \Algorithm{BuildOA2}, will produce an $[s,m,n,t]$ orthogonal array with $ s = n^\tau q^k$.
\end{lemma}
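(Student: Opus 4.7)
The plan is to exploit the divisibility $n \mid q-1$ together with the dual-distance hypothesis in order to convert a $t$-wise uniform source over $\bF_q$ into one over $[n]$. Since $n \mid q-1$, the group $\bF_q^{*}$ has a subgroup of index $n$, so there is a balanced surjection $\pi : \bF_q^{*} \to [n]$ whose fibres all have size $(q-1)/n$. I expect \textsc{BuildOA} to index rows by pairs $(u,\alpha) \in C \times [n]^\tau$, giving $|C|\cdot n^\tau = q^k n^\tau = s$ rows, and to set
\[
M_{(u,\alpha),\, j} \;=\; \begin{cases} \pi(b_j - u_j) & \text{if } b_j \neq u_j, \\ \alpha_{\sigma_u(j)} & \text{if } b_j = u_j, \end{cases}
\]
where $\sigma_u$ is a fixed bijection from $Z(u) := \{j : b_j = u_j\}$ onto $\{1,\ldots,|Z(u)|\}$. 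The hypothesis $\Delta(b,u)\ge m-\tau$ ensures $|Z(u)|\le\tau$, so $\sigma_u$ fits inside $[n]^\tau$; entries of $\alpha$ beyond position $|Z(u)|$ are unused.

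To verify the orthogonal-array property, I would fix any $t$ columns $T=\{j_1,\ldots,j_t\}$ and a target $\alpha^* \in [n]^t$, then count matching rows. For a fixed $u \in C$, let $E(u) = \{i : b_{j_i} = u_{j_i}\}$. The target imposes one constraint on $\alpha$ at each of the $|E(u)|$ distinct coordinates $\sigma_u(j_i)$ for $i \in E(u)$, and on each $i \notin E(u)$ it imposes the coset constraint $u_{j_i} \in b_{j_i} - \pi^{-1}(\alpha^*_i)$. Hence the number of matching $\alpha$ given $u$ is $n^{\tau-|E(u)|}$ when every coset constraint holds and $0$ otherwise, so the total row count equals
\[
n^\tau \sum_{u \in C} \prod_{i=1}^{t} f_i(u_{j_i}),
\]
where $f_i : \bF_q \to \{0,\,1/n,\,1\}$ takes value $1/n$ at $x = b_{j_i}$, value $1$ on the $(q-1)/n$-element set $b_{j_i} - \pi^{-1}(\alpha^*_i)$, and $0$ elsewhere. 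In particular $\sum_{x\in \bF_q} f_i(x) = 1/n + (q-1)/n = q/n$.

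The final step uses the dual-distance hypothesis. Because $C^\perp$ has minimum distance at least $t+1$, the coordinate projection $C \to \bF_q^T$ is surjective (otherwise its image lies in a hyperplane, giving a nonzero element of $C^\perp$ supported on $T$, contradicting $|T|=t$), and every fibre has size $q^{k-t}$. Therefore the sum factorises:
\[
\sum_{u \in C} \prod_{i=1}^{t} f_i(u_{j_i}) \;=\; q^{k-t} \prod_{i=1}^{t} \sum_{x \in \bF_q} f_i(x) \;=\; q^{k-t} (q/n)^t \;=\; q^k / n^t,
\]
and multiplying by $n^\tau$ gives exactly $s/n^t$, as required. The main subtle point, and what I expect to be the only nontrivial piece of bookkeeping, is that the weight $1/n$ assigned to $x = b_{j_i}$ (contributed by the free $\alpha$-coordinates) must line up exactly with the uniform fibre weight at the $(q-1)/n$ matching non-zero residues, so that $\sum_x f_i(x)$ takes the same value $q/n$ regardless of which $j_i$ happen to lie in $Z(u)$; without the $n^{\tau - |E(u)|}$ accounting the count would depend on $u$ and the orthogonal-array property would break.
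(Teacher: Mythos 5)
Your proof is correct, and it takes a genuinely different route from the paper's. You argue by direct counting: you fix $t$ columns and a target pattern, show that each codeword $u$ contributes $n^{\tau-|E(u)|}$ matching rows (or none), encode this as a product of weight functions $f_i$ with $\sum_x f_i(x)=q/n$, and then use the fact that dual distance at least $t+1$ forces the projection of $C$ onto any $t$ coordinates to be surjective with all fibres of size $q^{k-t}$, so the sum factorises to exactly $s/n^t$. The paper instead recasts everything probabilistically: it views $C$ as a \SU{t} family $\cH_q$ of $[m]\to[q]$ functions (citing the same dual-distance fact you prove inline), augments each function with an independent uniform string $N\in[n]^m$ to repair the bad coordinates, and shows the resulting family is \SU{t} because the outputs are a deterministic function of the $t$-wise independent pairs $(h(i_j),N_{i_j})$, plus a one-line uniformity computation $\tfrac{1}{qn}+\tfrac{q-1}{qn}=\tfrac1n$. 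Your bookkeeping with the extra factor $n^{\tau-|E(u)|}$ plays exactly the role of the paper's padding of each row to multiplicity $n^{\tau-\ell}$, and your row multiset (indexed by $C\times[n]^\tau$, with the unused $\alpha$-coordinates creating the duplicates, and with $\pi(b_j-u_j)$ as a particular $\tfrac{q-1}{n}$-to-$1$ map) coincides with the algorithm's output, so your count genuinely verifies the lemma as stated. The trade-off: the paper's argument is shorter once the hash-function equivalence and the ``deterministic function of $t$-wise independent variables'' principle are in place, and it explains \emph{why} the construction works; your argument is more self-contained and elementary, proving the needed consequence of the dual-distance hypothesis directly rather than citing it, at the cost of slightly heavier explicit bookkeeping.
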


An intuitive explanation of this lemma is as follows.
If $u \in C$ were chosen uniformly at random, our hypotheses on $C$ would imply that the coordinates of $u$ will be uniformly distributed and $t$-wise independent.
Thus, if for each $u \in C$, we insert $u$ as a new row in the array $M$, then the columns of $M$ would also be uniformly distributed and $t$-wise independent, which is equivalent to $M$ being an orthogonal array.
The catch, of course, is that the entries of $u$ take values in $\bF_q$, whereas the entries of $M$ must take values in $[n]$.
If $q$ were a multiple of $n$, then this problem would be easily resolved: we could instead insert $u$ modulo $n$ (entry-wise), as a new row in $M$.
The resulting matrix $M$ is easily seen to be an orthogonal array.
More generally, the same construction would work using any $\frac{q}{n}$-to-$1$ map $f : \bF_q \rightarrow [n]$
applied entry-wise to $u$ to create a row of $M$.

However, the main scenario of interest is the one in which $n$ is a composite number,
so $q$ (being a prime power) certainly cannot be a multiple of $n$.
Instead, we have $q \equiv 1 \pmod n$, so we require a new approach to create rows of $M$ from the vectors $u \in C$.
The idea is to choose a ``bad value'' $\beta$ to eliminate from $\bF_q$, so that the number of remaining ``good values'' is a multiple of $n$, and these good values can be mapped to integers in $[n]$ in a way that preserves the uniform distribution.
This mapping, which we denote $\phi_\beta$, between good values and $[n]$ can be completely arbitrary, except that it needs to be a $\frac{q-1}{n}$-to-1 map in order to ensure uniformity.
For example, if we identify the elements in $\bF_q$ with the integers in $\set{0,\ldots,q-1}$, then one may check that
\begin{equation}
\EquationName{PhiDef}
\phi_\Special(x) = 1+\Big(\big((x+q-1-\Special) \text{~mod~} q \big) \text{~mod~} n \Big)
\end{equation}
is indeed a $\frac{q-1}{n}$-to-1 map from $\bF_q \setminus \set{\beta}$ to $[n]$.

For future purposes, it will be convenient to allow this bad value to depend on the coordinate.
Using the given vector $b$, we let $b_j$ be the bad value for coordinate $j$.
To eliminate these bad values from $u$, 
we must find all indices $j$ with $u_j=b_j$ and remap $u_j$ to a new value.
Intuitively, we would like to do so uniformly at random.
However the construction must be deterministic, so instead we create multiple copies of $u$ in which each such $u_j$ has been replaced with every possible value in $[n]$.
In \Algorithm{BuildOA2}, the vector $v$ specifies how the bad entries of $u$ will be fixed, whereas the vector $w(u,v)$ is the modified copy of $u$ that incorporates all the fixes and maps all entries to $[n]$.
To control the number of copies that are created, we must bound the number of bad indices.
This is ensured by the condition $\Delta(b,u) \geq m - \tau$: 
every $u \in C$ has at most $\tau$ bad indices.
Lastly, to ensure that every two codewords in $C$ generate the \emph{same number} of rows in $M$, we add $n^{\tau-\ell}$ copies of each modified vector $w(u,v)$, where $\ell=m-\Delta(b,u)$ is the number of indices at which $u$ and $b$ agree.

\begin{algorithm}[t]
\caption{\textsc{BuildOA} uses the code $C$ and vector $b$ to construct an orthogonal array $M$.
Here $\phi_\beta$ is an arbitrary $\frac{q-1}{n}$-to-1 map from $\bF_q\setminus\set{\beta}$ to $[n]$,
as in \eqref{eq:PhiDef}.
}
\AlgorithmName{BuildOA2}
\begin{algorithmic}[1]
        \Function{\textsc{BuildOA}}{integer $m, n, t, q, \tau$, code $C$, vector $b$}
 	\State Let $M$ be a matrix with $m$ columns and no rows
	\For{$u \in C$}
		\State Let $Z= \set{z_1 ,\ldots , z_\ell} \subseteq [m]$ \\
        \makebox[25pt]{} be the set of indices with $u_{z_i}=b_{z_i}$
		\State \textbf{for} $v \in [n] ^ \ell$ \textbf{do}
            \State \hspace{16pt} $\rhd$ $v$ specifies how to fix all bad values in $u$
			\State \hspace{16pt} Define the vector $w(u,v) \in [n]^m$ by 
			\State 
				\[
					\hspace{1.5cm} w(u,v)_j = 
					\begin{cases} 
						v_i 		& \text{if } j = z_i \text{ for some } i \in [\ell] \\
						\phi_{b_j}(u_j) 	& \text{otherwise, i.e., if } u_j \neq b_j
					\end{cases}
				\]
			\State \hspace{16pt} Add $n ^ {\tau - \ell}$ copies of the row vector $w(u,v)$ \\
            \makebox[43pt]{} to the matrix $M$
		\State \textbf{end for}
	\EndFor
    \EndFunction
\end{algorithmic}
\end{algorithm}

\begin{proofof}{\Lemma{BuildOAWorks}}
Let us view $C$
as a multiset $\cH_q$ of functions mapping $[m]$ to $[q]$ (cf.~\Claim{Equiv}).
Since the dual of $C$ has minimum distance at least $t+1$, it follows that $\cH_q$ is \SU{t} \cite[Exercise 2.13]{GRS} 
\cite[Theorem 10.17]{StinsonBook}.

\textsc{BuildOA} produces a matrix $M$ with $m$ columns, where each entry is in $[n]$.
Let $s$ be the total number of rows added.
Since each vector $u \in C$ produces exactly $n^\tau$ rows in $M$, and $C$ has dimension $k$, we have
\[ s = n^\tau \card{C} = n^\tau q^k, \] as required.
By \Claim{Equiv} again, we will view $M$ as a multiset $\cH = \set{h_1,\ldots,h_s}$ of $[m]\rightarrow[n]$ functions. 
It remains to show that $\cH$ is \SU{t}.

Consider a function $h \in \cH_q$.
For a sequence $y \in [n] ^ m$, define the function $f' _ {h, y} \colon [m] \to [n]$ by
\[
	f'_{h, y}(j) = 
	\begin{cases}
		\phi_{b_j}(h(j)) 	& \text{ if } h(j) \neq b_j \\ 
		y_j 		& \text{ if } h(j) = b_j.
	\end{cases}
\]
Here the ``bad'' hash values are replaced with entries of $y$, and the other hash values are mapped by $\phi$ into $[n]$.
Let $\cH'$ be the multiset made by adding one copy of $f'_{h, y}$ (counting multiplicities) for each $y \in [n]^m$ and $h \in \cH_q$, i.e.,
\[
	\cH' = \setst{f' _ {h, y} }{ y \in [n]^m , ~ h \in \cH_p }.
\]
Note that each function $h \in \cH_q$ gives rise to exactly $n ^ m$ functions in $\cH'$, although they are not necessarily unique.
It is clear that $\cH'$ is the family made by repeating each function in $\cH$ exactly $n ^ {m - \tau}$ times.
Therefore $\cH'$ is \SU{t} if and only if $\cH$ is \SU{t}.

We will show that $\cH'$ is \SU{t} by showing that, for any fixed $i_1 , \ldots, i_t \in [m]$, the random variables $h'(i_1), \ldots, h'(i_t)$ are independent and uniformly distributed over $[n]$ when $h' \in \cH'$ is chosen uniformly at random.
Note that by definition of $\cH'$,
uniformly sampling a function $h'$ from $\cH'$ is (in distribution) equivalent to  uniformly sampling a function $h \in \cH_q$, and a sequence $N=(N_1, \ldots, N_m)$ of independent and uniformly distributed random variables in $[n]$, and then returning $h' = f'_{h,N}$.

To show that $h'(i_1),\ldots,h'(i_t)$ are $t$-wise independent\footnote{Here the term ``$t$-wise independent'' has the meaning from probability theory, where it does not imply that the random variables are uniform. We use the term ``$t$-independent'' to indicate the meaning in pseudorandomness, where the random variables must additionally be uniformly distributed.}, we will show that they are a deterministic function of random variables that are themselves $t$-wise independent.
That is, the pairs
\[ (h(i_1),N_{i_1}), \ldots, (h(i_t),N_{i_t}) \]
are easily seen to be $t$-wise independent.
We obtain the values $h'(i_1),\ldots,h'(i_t)$ by applying the function $f \colon \bF_ q \times [n] \times \bF_q \to [n]$, defined by 
\[
	f(x , r, \beta) = 
	\begin{cases}
		\phi_\beta(x)	& \text{ if } x \neq \beta \\
		r 		& \text{ if } x = \beta.
	\end{cases}
\]
Notice that by definition of $ f' _ {h, N}$, we have 
\[
	h'(i_j) = f(h(i_j) , N_{i_j}, b_{i_j}), \quad\forall j \in [t].
\]
Since $f$ and $b$ are deterministic, it follows that
$ h'(i_1),\ldots,h'(i_t) $
are $t$-wise independent.

To complete the proof that $\cH'$ is \SU{t}, we must also show that each $h'(i)$ is uniformly distributed.
To see this, note that for each $i \in [m]$ and $\nu \in [n]$
\begin{align*}
	\prob{h'(i) = \nu}
 &~=~ \prob{h(i) = b_i \land N _ i = \nu} + \prob{h(i) \in \phi_{b_i} \inv (\nu)} \\
 &~=~ \frac{1}{q} \cdot \frac{1}{n}
  + \frac{(q-1)}{nq}
   ~=~ \frac{1}{n}
\end{align*}
since $h(i)$ is uniformly distributed, and  $\phi$ is a $\frac{q-1}{n}$-to-$1$ map.

To summarize, for any $i_1, \ldots, i_t \in [m]$, $h'(i_1), \ldots, h'(i_t)$ are independent and uniformly distributed random variables over $[n]$.
That is, $\cH'$ is \SU{t}. 
As previously argued, this implies that $\cH$ is \SU{t}, which is what we intended to prove.
\end{proofof}

\begin{corollary}
\CorollaryName{BuildOAWorks}
Let $m,n,t$ be integers with $n \geq 2$ and $2 \leq t \leq m$.
Let $q$ be a prime power satisfying $q \equiv 1 \pmod n$.
Let $C_1 \subsetneq C_2 \subseteq \bF_q^m$ be linear codes such that the dual of $C_1$ has minimum distance at least $t+1$.
Let $k$ be the dimension of $C_1$
and $d_2$ be the minimum distance of $C_2$.
Then there is an algorithm to produce an 
$[s,m,n,t]$ orthogonal array with
$s = n^{m-d_2} q^k$.
\end{corollary}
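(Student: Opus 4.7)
The plan is to reduce this corollary to \Lemma{BuildOAWorks} by choosing a suitable vector $b$ using the larger code $C_2$. Concretely, I would apply the lemma with code $C = C_1$ and parameter $\tau = m - d_2$, so that the output size becomes $s = n^\tau q^k = n^{m - d_2} q^k$, matching the claim. The hypothesis on the dual of $C_1$ is exactly what the lemma requires, so the only nontrivial task is to exhibit a vector $b \in \bF_q^m$ satisfying $\Delta(b, u) \geq m - \tau = d_2$ for every $u \in C_1$.

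The key observation is that the strict containment $C_1 \subsetneq C_2$ makes this easy: choose any $b \in C_2 \setminus C_1$, which exists by hypothesis. For any $u \in C_1 \subseteq C_2$, linearity of $C_2$ gives $b - u \in C_2$, and $b - u \neq 0$ because $b \notin C_1$. Hence $b - u$ is a nonzero codeword of $C_2$, so its Hamming weight is at least $d_2$, which means $\Delta(b, u) = \mathrm{wt}(b - u) \geq d_2$, as needed.

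Algorithmically, finding such a $b$ is straightforward: compute a basis of $C_1$, extend it to a basis of $C_2$ (e.g.\ by Gaussian elimination on a generator matrix for $C_2$), and take $b$ to be any extending basis vector. Then invoke \textsc{BuildOA} with inputs $(m, n, t, q, \tau, C_1, b)$, which by \Lemma{BuildOAWorks} produces the desired $[s, m, n, t]$ orthogonal array with $s = n^{m - d_2} q^k$.

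I do not anticipate a real obstacle here; the corollary is essentially a packaging of the lemma with a convenient choice of bad-value vector. The only subtlety worth flagging is that one must verify $b - u \neq 0$ (not just $b - u \in C_2$), which is where the \emph{strict} containment $C_1 \subsetneq C_2$ is used; if one only assumed $C_1 \subseteq C_2$, the vector $b = 0$ would give $\Delta(b, u) = \mathrm{wt}(u)$, which need not be bounded below by $d_2$ for codewords of $C_1$.
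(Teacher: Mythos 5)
Your proposal is correct and follows exactly the paper's own proof: pick $b \in C_2 \setminus C_1$, note that $b-u$ is a nonzero codeword of $C_2$ so $\Delta(b,u)\geq d_2$ for all $u \in C_1$, and apply \Lemma{BuildOAWorks} with $C:=C_1$ and $\tau := m-d_2$. The additional justification you give (why $b-u \neq 0$, and how to find $b$ algorithmically) is fine and just makes explicit what the paper leaves implicit.
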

\begin{proof}
Pick any $b \in C_2 \setminus C_1$.
Then $\Delta(b,u) \geq d_2$ for all $u \in C_1$.
Thus we may apply \Lemma{BuildOAWorks}, taking $C := C_1$ and $\tau := m - d_2$.
\end{proof}

\section{Construction using Reed-Solomon codes}
\SectionName{RS}

In this section, we instantiate the abstract bound using Reed-Solomon codes. 

\begin{theorem}
\TheoremName{RSOA}
There exists a universal constant $c>0$ such that the following is true.
For any integers $n , m , t \geq 2$ with $t \leq m $, there exists an explicit $[s,m,n,t]$ orthogonal array with
\[ s ~\leq~ n^t \cdot \paren{c(n + m)}^ {5t}. \]
Assuming the generalized Riemann hypothesis,
\[ s ~\leq~ n^t \big((n+m) \ln(n+m)\big)^{2t}. \]
\end{theorem}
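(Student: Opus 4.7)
The plan is to apply \Corollary{BuildOAWorks} with nested Reed-Solomon codes of dimensions $t$ and $t+1$. The two main ingredients are the existence of a suitable prime $q$ with controlled size, and the standard duality and distance properties of Reed-Solomon codes.

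First I would locate a prime $q \geq m$ with $q \equiv 1 \pmod{n}$. Unconditionally, Linnik's theorem (the smallest prime $\equiv 1 \pmod n$ is at most $n^{L}$ for an absolute constant $L$, with $L \leq 5$ known) together with the trivial lower bound $q \geq m$ yields a prime $q \leq c_0 (n+m)^5$ for some absolute constant $c_0$; such a $q$ can be located deterministically by sieving and primality testing in time $\poly(n+m)$. Under GRH, the effective prime number theorem in arithmetic progressions supplies a prime $q \equiv 1 \pmod n$ with $m \leq q \leq \big((n+m)\ln(n+m)\big)^{2}$ after tuning constants.

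Next, fix any $m$ distinct evaluation points $\alpha_1,\ldots,\alpha_m \in \bF_q$ (which exist since $q \geq m$), and let $C_j \subseteq \bF_q^{m}$ denote the Reed-Solomon code obtained by evaluating polynomials of degree less than $j$ at these points. Set $C_1 := C_t$ and $C_2 := C_{t+1}$. Standard facts then give $C_1 \subsetneq C_2$, $\dim C_1 = t$, the dual of $C_1$ has minimum distance exactly $t+1$, and $C_2$ has minimum distance $d_2 = m - t$. Plugging $k = t$ and $\tau = m - d_2 = t$ into \Corollary{BuildOAWorks} yields an $[s,m,n,t]$ orthogonal array with
\[ s ~=~ n^{m-d_2} q^k ~=~ n^{t} q^{t}. \]
Substituting the two bounds on $q$ gives $s \leq n^{t}(c(n+m))^{5t}$ unconditionally and $s \leq n^{t}\big((n+m)\ln(n+m)\big)^{2t}$ under GRH, after absorbing universal constants into $c$.

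The main obstacle is purely number-theoretic: quantifying the smallest prime $q \geq m$ in the arithmetic progression $1 \pmod n$. The unconditional Linnik-type bound drives the exponent $5$, and the GRH-based bound drives the exponent $2$. The coding-theoretic side of the argument is entirely routine once $q$ is fixed, and any degenerate corner (e.g.\ $t = m$, where $C_{t+1}$ would coincide with $\bF_q^m$) can be handled separately at negligible cost.
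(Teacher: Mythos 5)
Your proposal is essentially the paper's own proof: the same prime $q \equiv 1 \pmod n$, $q > m$ from Linnik (unconditionally, exponent $5$; under GRH, $(\cdot\ln\cdot)^2$), the same nested Reed--Solomon codes of dimensions $t$ and $t+1$ fed into \Corollary{BuildOAWorks} with $k=t$ and $\tau = m-d_2 = t$, giving $s = n^t q^t$. One small imprecision: the least-prime form of Linnik's theorem modulo $n$ does not by itself give a prime that is also $\geq m$; the paper fixes this by applying Linnik with the larger modulus $\eta = n\ceil{m/n}$ (so that $p \equiv 1 \pmod \eta$ forces both $p \equiv 1 \pmod n$ and $p > m$), and your argument needs this (or an equivalent counting form) spelled out, after which everything else goes through as you wrote it.
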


This bound matches (up to constants in the exponent) the non-constructive bound in the conference article of Kuperberg, Lovett and Peled \cite{KLP12}.
This bound is not near-optimal (i.e., $(mn/t)^{O(t)}$) for all parameters, but it is in many regimes.
(For example, it is near-optimal if $t \leq \min\set{n,m}$, or $t \leq m^{0.99}$, or $n \geq m^{0.01}$. Also, if $t = \Omega(m)$, then the trivial bound of $s \leq n^m$ is near-optimal.)
\Section{AG} presents an improved construction that is near-optimal for \emph{all} parameters, matching the non-constructive results of the journal article \cite{KLP17}.

\paragraph{The prime.}
The first step of the proof is to find a prime $p$ satisfying $p \equiv 1 \pmod n$ and $p > m$.
We will use the following result.

\begin{theorem}[Linnik's theorem]
\TheoremName{Linnik}
There exist universal constants $L, c_L > 0$ such that the following is true.
For any integer $n \geq 2$, and any positive integer $a$ coprime with $n$,
there exists a prime $p$ satisfying
(i) $p \equiv a \pmod n$,
and (ii) $p \leq c_L \cdot n^L$.
\end{theorem}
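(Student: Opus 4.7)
The plan is simply to invoke Linnik's theorem as a classical result of analytic number theory: it was proved by Yu.~V.~Linnik in 1944, and the best known value of the exponent (currently $L = 5$, due to Xylouris 2011) suffices for our purposes. For the paper itself I would just cite the original reference. The remainder of this proposal sketches the standard analytic template one would follow to reprove it, so that the reader can see where the constants $L$ and $c_L$ come from.

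The first step is to reduce the statement to a lower bound on the Chebyshev function in arithmetic progressions,
\[
\psi(x;n,a) ~=~ \sum_{\substack{p^k \leq x \\ p^k \equiv a \,(\mathrm{mod}\, n)}} \log p.
\]
Since the contribution of true prime powers $p^k$ with $k\geq 2$ is $O(\sqrt{x}\log x)$, it is enough to show $\psi(x;n,a) \geq x/2$ for some $x \leq c_L n^L$. Using orthogonality of Dirichlet characters modulo $n$, one expands
\[
\psi(x;n,a) ~=~ \frac{1}{\varphi(n)}\sum_{\chi \bmod n} \overline{\chi(a)}\,\psi(x,\chi),
\]
where $\psi(x,\chi) = \sum_{p^k \leq x} \chi(p^k)\log p$. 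The principal character contributes the main term $\sim x/\varphi(n)$, and for the remaining characters one applies the explicit formula, expressing $\psi(x,\chi)$ as $-\sum_\rho x^\rho/\rho$ plus negligible error, where $\rho$ ranges over nontrivial zeros of $L(s,\chi)$ in the critical strip.

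The argument is then completed by combining three deep analytic inputs. (i) A classical zero-free region: every nontrivial zero $\rho$ of every $L(s,\chi)$ mod $n$ satisfies $\Re \rho < 1 - c/\log(n(|\Im\rho|+2))$, with one possible exception -- a real Siegel zero $\beta_0$ of a single real character $\chi_1$. (ii) A log-free zero-density estimate of the form $N(\sigma,T,\chi) \ll (nT)^{C(1-\sigma)}$, which replaces the usual $(nT)^{C(1-\sigma)}\log^A(nT)$ bound and is essential for polynomial (rather than quasi-polynomial) dependence on $n$. (iii) The Deuring--Heilbronn zero-repulsion phenomenon: if the Siegel zero $\beta_0$ exists and is close to $1$, then every other zero of every $L(s,\chi)$ mod $n$ is pushed further from the line $\Re s = 1$ by a factor proportional to $\log(1/(1-\beta_0))$.

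The main obstacle, and the technical heart of Linnik's original argument, is to control the possible Siegel zero, whose contribution $-\overline{\chi_1(a)}\, x^{\beta_0}/\beta_0$ could in principle cancel the main term $x/\varphi(n)$. Deuring--Heilbronn's repulsion supplies precisely the quantitative slack needed: together with the log-free density estimate, it guarantees that the sum of contributions from all other zeros is $o(x/\varphi(n))$ once $x \geq n^L$ for a sufficiently large absolute constant $L$. Thus for such $x$ one obtains $\psi(x;n,a) > 0$, yielding the required prime $p \leq c_L n^L$ with $p \equiv a \pmod n$.
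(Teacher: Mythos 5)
Your proposal is correct and matches the paper's treatment: the paper does not reprove Linnik's theorem but simply cites it as a classical result (Linnik 1944, with the exponent refined by Heath-Brown and Xylouris), exactly as you propose. Your sketch of the standard proof via the explicit formula, the zero-free region with a possible exceptional zero, the log-free zero-density estimate, and Deuring--Heilbronn repulsion is an accurate account of the classical argument, but it is supplementary to what the paper requires.
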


The constant $L$ is known as Linnik's constant.
Although Linnik in 1944 did not provide an explicit value for $L$, the most recent developments.
have shown that $L \leq 5.5$ in 1992 \cite{HeathBrown}, $L < 5.2$ in 2011 \cite{Xylouris},
and $L \leq 5$ in \cite[Theorem 2.1]{XylourisPhD}.
Assuming the generalized Riemann hypothesis \cite{Lamzouri}, the conclusion (ii) can be strengthened to $p \leq (n \ln n)^2$.

\begin{corollary}
\CorollaryName{Linnik}
For any constants $L, c_L > 0$ satisfying Linnik's theorem, the following is true.
For any integer $n \geq 2$ and real $m \geq 2$, 
there exists a prime $p$ satisfying
(i) $p \equiv 1 \pmod n$,
(ii)~$p \leq c_L (n+m)^L$,
and (iii) $m < p$.
\end{corollary}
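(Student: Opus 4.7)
\textbf{Proof plan for \Corollary{Linnik}.} The idea is to apply Linnik's theorem not to the modulus $n$ itself, but to a carefully chosen multiple $n'$ of $n$ that is slightly larger than $m$. Any prime $p \equiv 1 \pmod{n'}$ will automatically satisfy $p \equiv 1 \pmod n$ (giving (i)) and $p \geq 1 + n' > m$ (giving (iii)), so we just need to keep $n'$ small enough that Linnik's bound translates into (ii).

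First I would define $k = \lceil (m+1)/n \rceil$ and set $n' := n k$, so that $n'$ is the smallest multiple of $n$ with $n' \geq m+1$. By minimality, $n(k-1) < m+1$, hence $n(k-1) \leq m$, and therefore
\[
m < n' \leq m + n \leq n+m.
\]
Since $n \geq 2$, we also have $n' \geq 2$, and of course $\gcd(1, n') = 1$.

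Next I would apply \Theorem{Linnik} with modulus $n'$ and residue $a = 1$: this produces a prime $p$ with $p \equiv 1 \pmod{n'}$ and $p \leq c_L (n')^L$. Combining with $n' \leq n+m$ yields $p \leq c_L(n+m)^L$, which is (ii). Because $n \mid n'$, condition (i) follows immediately. For (iii), observe that $p \equiv 1 \pmod{n'}$ means $p \in \{1, 1+n', 1+2n', \ldots\}$, and since $p$ is prime we have $p \neq 1$, so $p \geq 1 + n' > m$.

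There is no real obstacle here beyond the bookkeeping of the multiple: the only subtlety is to verify simultaneously that $n' > m$ (to force $p > m$) and $n' \leq n+m$ (to preserve the bound in (ii)). Both are captured by the choice $n' = n\lceil (m+1)/n \rceil$. The GRH strengthening is inherited automatically, since \Theorem{Linnik} with $L = 2$ and $c_L$ absorbed into the $(n \ln n)^2$ form applies verbatim to $n'$, yielding $p \leq (n' \ln n')^2 \leq ((n+m)\ln(n+m))^2$.
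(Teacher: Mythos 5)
Your overall approach is exactly the paper's: apply Linnik's theorem to a multiple of $n$ that sits just above $m$, so that $p\equiv 1\pmod{n'}$ forces both $n\mid p-1$ and $p>m$, while $n'\le n+m$ keeps the bound (ii). However, there is a slip in the bookkeeping that matters because the corollary is stated (and later used, e.g.\ with the quantity $11+m/t$ in the AG-code section) for \emph{real} $m$, not just integer $m$. You define $k=\lceil (m+1)/n\rceil$ and argue ``$n(k-1)<m+1$, hence $n(k-1)\le m$''; this inference uses that $m$ is an integer. For non-integer $m$ it can fail: take $n=3$, $m=2.5$, so $k=\lceil 3.5/3\rceil=2$, $n(k-1)=3>m$, and $n'=6>5.5=n+m$. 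Then Linnik only gives $p\le c_L (n')^L$, which is not bounded by $c_L(n+m)^L$, so condition (ii) is not established by your chain of inequalities (you only get the weaker $p\le c_L(n+m+1)^L$, and the corollary does not allow adjusting the constants).

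The fix is small and is what the paper does: take the smallest multiple of $n$ that is at least $m$ (not at least $m+1$), i.e.\ $\eta=n\lceil m/n\rceil$, which satisfies $m\le \eta< m+n$ for all real $m$. Applying \Theorem{Linnik} with modulus $\eta$ and $a=1$ gives $p\le c_L\eta^L<c_L(n+m)^L$, and condition (iii) still holds without needing $\eta>m$: since $\eta\mid p-1$ and $p-1>0$, we get $p\ge \eta+1\ge m+1>m$. Everything else in your argument (divisibility giving (i), ruling out $p=1$ by primality) is fine and matches the paper's proof.
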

\begin{proof}
Let $\eta = n \cdot \ceil{m / n}$. Note that $m \leq \eta < m + n$. Let $p$ be the prime obtained by \Theorem{Linnik} with $\eta$ instead of $n$, and with $a=1$ (which is trivially coprime with $n$).
The theorem ensures that $\eta \divides p - 1$; since $n \divides \eta$, we have $n \divides p - 1$, which establishes (i).
Since $\eta \divides p - 1$ and $p-1>0$, we must have $m \leq \eta < p$, which establishes (iii).
Lastly, we have $p \leq c_L \cdot \eta^L < c_L \cdot (m + n)^L$, which establishes (ii).
\end{proof}

A prime $p$ satisfying the conditions of \Corollary{Linnik} can be found by exhaustive search. 
Simply test the primality of all integers $p$ satisfying $p \equiv 1 \pmod {\eta}$ and $p \leq c_L \cdot \eta^L$, of which there are at most $\eta^{L-1}$.
Since the primality of $x$ can be tested in $O(\log^7 x)$ time, the overall runtime is $O(\eta^{L-1} \log^7 \eta) = \tilde{O}((n+m)^{4})$.

\paragraph{The codes.}
The next step of the proof is to find codes of appropriate parameters that can be used with \Corollary{BuildOAWorks}.
Let $q$ equal the prime $p$ chosen above.
We will use Reed-Solomon codes in $\bF_q^m$.
The following theorem states their basic properties, with apparently excessive detail, in order to draw a parallel with \Theorem{Goppa} below.

\begin{theorem}
\TheoremName{RS}
Let $m \leq q$. There exists a sequence of linear codes $C_0, \ldots, C_{m-1} \subseteq \bF_q^m$, where $C_a$ has parameters $[m, k_a, d_a]_q$, such that the following statements hold.
\begin{alignat}{2}
C_a \subsetneq C_{a+1} \quad\forall ~ 0 \leq a < m-1 
\\
\EquationName{RSDist}
d_a = m-a \quad\forall ~ 0 \leq a < m
\\
\EquationName{RSDim}
k_a = a+1 \quad\forall ~ 0 \leq a < m
\\
\EquationName{RSDual}
k_a ^ \perp + d_a ^ \perp = m  + 1 \quad~ \forall ~ 0 \leq a < m 
\end{alignat}
Here $k_a^\perp$ and $d_a ^ \perp$ respectively denote the dimension and distance of $C_a^\perp$, the dual of $C_a$.
\end{theorem}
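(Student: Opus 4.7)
Since $m \leq q$, fix distinct evaluation points $\alpha_1, \ldots, \alpha_m \in \bF_q$, and for $0 \leq a \leq m-1$ define $C_a$ as the image of the evaluation map $\mathrm{ev} \colon f \mapsto (f(\alpha_1), \ldots, f(\alpha_m))$ applied to the space $P_a$ of polynomials in $\bF_q[x]$ of degree at most $a$. The nesting $C_a \subsetneq C_{a+1}$ is immediate from $P_a \subsetneq P_{a+1}$ together with the injectivity of $\mathrm{ev}$ on $P_{m-1}$. That injectivity in turn holds because a nonzero polynomial of degree less than $m$ has fewer than $m$ roots, and it yields both the dimension identity $k_a = \dim P_a = a+1$ and the distance lower bound $d_a \geq m - a$ (since a nonzero $f \in P_a$ contributes at most $a$ zero entries to its codeword). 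Combined with Singleton's bound $d_a \leq m - k_a + 1 = m - a$, this gives $d_a = m - a$.

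For the duality identity $k_a^\perp + d_a^\perp = m + 1$, the dimension piece is automatic: $k_a^\perp = m - k_a = m - a - 1$, so Singleton forces $d_a^\perp \leq a + 2$, and it remains to prove the matching lower bound. My plan is to invoke the classical fact that the dual of a Reed-Solomon code is a generalized Reed-Solomon code. Setting $\lambda_i := \prod_{j \neq i}(\alpha_i - \alpha_j)^{-1} \in \bF_q$, I claim
\[
C_a^\perp \;=\; \setst{(\lambda_1 g(\alpha_1), \ldots, \lambda_m g(\alpha_m))}{g \in \bF_q[x],\; \deg g \leq m - a - 2}.
\]
Once this identification is established, the distance lower bound follows exactly as in the primal case: since every $\lambda_i$ is nonzero, a codeword vanishes at coordinate $i$ iff $g(\alpha_i) = 0$, and a nonzero $g$ of degree at most $m - a - 2$ has at most $m - a - 2$ roots, so the weight of the codeword is at least $a + 2$.

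The main obstacle is verifying the explicit dual description. I expect to prove the inclusion $C_a^\perp \supseteq (\text{right-hand side})$ by a direct inner-product computation: for $f \in P_a$ and $g \in \bF_q[x]$ of degree at most $m - a - 2$, the product $fg$ has degree at most $m - 2$, and the identity $\sum_i \lambda_i h(\alpha_i) = 0$ for every $h \in \bF_q[x]$ of degree at most $m - 2$ shows that the inner product $\sum_i f(\alpha_i)\cdot \lambda_i g(\alpha_i)$ vanishes. This identity is a standard consequence of Lagrange interpolation: the coefficient of $x^{m-1}$ in $h$ equals $\sum_i \lambda_i h(\alpha_i)$, and is zero whenever $\deg h \leq m - 2$. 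Equality of the two sides then follows by counting dimensions, since each side has dimension $m - a - 1$. Everything apart from this Lagrange/dimension step is routine bookkeeping.
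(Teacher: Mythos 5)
Your proposal is correct and takes essentially the approach the paper delegates to its cited references: the standard evaluation-code treatment of Reed--Solomon codes, with dimension and distance from root counting plus the Singleton bound, and the dual distance via the generalized Reed--Solomon description $C_a^\perp = \setst{(\lambda_1 g(\alpha_1),\ldots,\lambda_m g(\alpha_m))}{\deg g \leq m-a-2}$ proved by the Lagrange-interpolation identity and a dimension count. The only cosmetic point is the degenerate case $a=m-1$, where $C_a^\perp=\set{0}$ and \eqref{eq:RSDual} holds only under the usual convention $d^\perp=m+1$ for the zero code --- a convention issue already implicit in the theorem statement and irrelevant to how the paper uses it.
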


Proofs of the claims in this theorem may be found in standard references, e.g.,~\cite[Claim 5.2.1]{GRS}
\cite[Proposition 11.4]{Roth} \cite[Exercise 1.1.20]{TVN}.

We will apply \Corollary{BuildOAWorks} to the codes $C_{t-1} \subsetneq C_{t}$.
It follows from \eqref{eq:RSDual} that
\[ m+1 ~=~ k_{t-1}^\perp+d_{t-1}^\perp ~=~ (m-k_{t-1}) + d_{t-1}^\perp,
\]
and therefore
\[
d_{t-1}^\perp = k_{t-1}+1 = t+1.
\]
Thus \Corollary{BuildOAWorks}
yields an $[s,m,n,t]$ orthogonal array with
\begin{equation}
\EquationName{RSSize}
s ~=~ n^{m-d_t} p^{k_{t-1}}
 ~=~ n^t p^t
 ~\leq~ n^{t} \big(c_L (n+m)^{5} \big)^{t}.
\end{equation}
This proves \Theorem{RSOA}.

\section{Construction using algebraic geometry codes}
\SectionName{AG}

In this section we replace the Reed-Solomon codes with algebraic geometry codes, which gives a near-optimal orthogonal array for all parameters.
This matches the non-constructive bound shown in the journal article~\cite{KLP17}, and proves \Theorem{Main}.

\begin{theorem}
\TheoremName{AGOA}
There exists a universal constant $c>0$ such that the following is true.
For any integers $n , m , t \geq 2$ with $t \leq m$, there exists an $[s,m,n,t]$ orthogonal array with
\[ s ~\leq~ n^{5t} \big(c(n+11+m/t)^{5} \big)^{6t}. \]
Moreover, this array can be constructed
by a deterministic algorithm in $\poly(sm)$ time.
\end{theorem}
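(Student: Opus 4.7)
The plan is to mirror the proof of \Theorem{RSOA} from \Section{RS}, replacing Reed--Solomon codes by algebraic geometry codes from an explicit Garcia--Stichtenoth tower. The advantage of AG codes over RS codes is that the length $m$ may exceed the field size $q$, at the cost of an additive slack $g$ (the genus of the underlying curve) in the Singleton-type bounds on dimension and distance. This flexibility lets us take $q$ as small as roughly $(m/t)^2$ rather than the RS requirement $q \geq m$, and that is what is needed to improve the RS bound to the near-optimal one.

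\textbf{Step 1 (The prime).} Apply \Corollary{Linnik}, but with $\max(m/t, 11)$ in place of $m$, to obtain a prime $p$ with $p \equiv 1 \pmod n$, $p \geq m/t$, and $p \leq c_L (n + m/t + 11)^L$. Set $q := p^2$. This is a prime power satisfying $q \equiv 1 \pmod n$, a perfect square (which is essential for the Tsfasman--Vl\u{a}du\c{t}--Zink bound to give the optimal ratio of rational points to genus), and with $\sqrt{q} = p \geq m/t$.

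\textbf{Step 2 (Nested AG codes).} The analogue of \Theorem{RS} needed here is a ``Theorem Goppa'' stating that, over a square prime power $q$ and any integer $m$, there is a deterministically and explicitly constructible nested sequence $C_0 \subsetneq C_1 \subsetneq \cdots \subseteq \bF_q^m$ of one-point AG codes on a Garcia--Stichtenoth curve of genus $g = O(m/\sqrt{q})$ such that, for each $a$, $C_a$ has dimension $a - g + 1$, minimum distance at least $m - a$, and dual minimum distance at least $a - (2g - 2)$. With the choice of $q$ from Step~1, this yields $g = O(t)$.

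\textbf{Step 3 (Applying \Corollary{BuildOAWorks}).} Set $a_1 := t + 2g - 1$ and $a_2 := a_1 + 1$, and let $C_1 := C_{a_1}$ and $C_2 := C_{a_2}$. Then $\dim C_1 = t + g$, $(C_1)^\perp$ has minimum distance $\geq t + 1$, and $C_2$ has minimum distance $\geq m - t - 2g$. \Corollary{BuildOAWorks} then produces an $[s,m,n,t]$ orthogonal array with
\[ s ~\leq~ n^{m - (m - t - 2g)} q^{t + g} ~=~ n^{t + 2g} q^{t + g} ~\leq~ n^{5t} q^{3t}, \]
using $g \leq 2t$ in the last inequality. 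Substituting $q = p^2 \leq c_L^2 (n + m/t + 11)^{2L}$ with $L \leq 5$ gives the stated bound, and the entire procedure runs in $\poly(sm)$ time (prime search by exhaustive primality testing, the explicit tower construction, and then \textsc{BuildOA}).

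\textbf{Main obstacle.} The parameter bookkeeping in Steps~1 and~3 directly parallels \Section{RS} and is routine. The substantive content lies in Step~2: one must invoke a sharp, fully explicit, polynomial-time version of Goppa's bound on AG codes that delivers nested codes of arbitrary length $m$ with genus $g = O(m/\sqrt{q})$ and the precise dimension, distance, and dual-distance inequalities above. This is where the proof must engage in detail with the algorithmic AG-code literature; once that infrastructure is in hand, the rest of the proof is a direct re-instantiation of the argument in \Section{RS}.
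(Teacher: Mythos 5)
Your overall route is the same as the paper's: pick a prime $p \equiv 1 \pmod n$ with $p \gtrsim m/t$ via Linnik, set $q = p^2$, use AG codes with genus $O(t)$ in place of Reed--Solomon, and feed two nested codes into \Corollary{BuildOAWorks}; your parameter bookkeeping in Steps 1 and 3 essentially reproduces the paper's calculation $s \leq n^{5t} q^{3t}$. However, there is a genuine gap exactly at the step you defer to ``infrastructure'': the Garcia--Stichtenoth tower does not supply what Step 2 asserts. In that tower over $\bF_{p^2}$ the genus and the number of rational points of the successive curves grow \emph{geometrically}, by a factor of roughly $p = \sqrt{q}$ at each level. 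You need a curve with $N \geq m$ rational points and genus $g = O(t)$, i.e.\ a genus lying (up to a constant factor) in a window of constant multiplicative width around $m/p$; since the available genera in the tower are spaced by factors of $p \approx n + m/t$, for many pairs $(m,t)$ the smallest tower member with $N \geq m$ has genus larger than $t$ by a factor as large as $\Theta(p)$. The resulting ``defect'' is then $\Theta(pt)$ rather than $O(t)$, and the exponent in your bound blows up to $O(pt)$, destroying near-optimality. So the claim that for \emph{any} $m$ one gets an explicit curve of genus $O(m/\sqrt{q})$ with at least $m$ rational points cannot be sourced from the GS tower; this density issue is precisely the ``slight complication'' the paper flags.

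The paper resolves it by using the classical modular curves $X_0(11\ell)$ over $\bF_{p^2}$ (\Theorem{KTV}): the genus equals the prime $\ell$, which can be chosen freely, so taking $\ell$ to be a prime in $[t,2t]$ (Bertrand's postulate) gives genus between $t$ and $2t$, while $N \geq (p-1)(\ell+1) > m$ follows from $p \geq 11 + m/t$. This family is dense in exactly the sense your argument needs, and \Theorem{KTV} also supplies the $\poly(m)$-time construction of the generator matrices, which you would otherwise have to argue separately for whatever family you invoke. To repair your proof, replace the GS tower in Step 2 with such a dense family (or otherwise prove that an explicit, polynomial-time-constructible family exists whose attainable genera hit every interval $[t,Ct]$); the rest of your argument then goes through as written.
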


To motivate the proof, let us reflect on the construction of the previous section.
A standard bound on all $[m,k,d]$ linear codes is the \textit{Singleton bound},
which states that $k + d \leq m + 1$;
if equality holds it is called an \textit{MDS code}.
In \Section{RS}, to apply \Corollary{BuildOAWorks} we needed $d_{t-1}^\perp \geq t+1$.
Furthermore, to obtain an exponent of $O(t)$ in \eqref{eq:RSSize},
we wanted $d_t \geq m - O(t)$ and $k_{t-1}=O(t)$.
Since Reed-Solomon codes and their duals are MDS codes, we have
exactly $d_{t-1}^\perp=t+1$, $d_t = m-t$ and $k_{t-1}=t$.
This yields an orthogonal array with $s = n^t p^t$ rows,
with the main shortcoming being the undesirably large field size of $p = \poly(n+m)$.

The construction of this section attains an improved bound by reducing the field size to $\poly(n+m/t)$.
We cannot hope to use MDS codes anymore, since it is believed that they do not exist when the field size is less than $m-1$.
Instead, we will use codes that only approximately satisfy the Singleton bound, but do so over a much smaller field.
Algebraic geometry codes (AG codes) provide a tradeoff suitable for our purposes.

\paragraph{Algebraic geometry codes.}
These are a general class of codes that involve evaluations of rational functions over algebraic curves \cite{Goppa}. 
Reed-Solomon codes are a simple special case using evaluations of polynomials.
We begin with a detailed statement of the codes that can be constructed from a general curve.

Let $q$ be a prime power.
Let $X$ be a curve over $\bF_q$ (see \cite[\S 2.1.2]{TVN}).
Let $g$ be its genus (see \cite[pages 87 and 91]{TVN}).
Let $N$ be its number of rational points (see \cite[pp.~127]{TVN},  \cite[Definition VII.6.14]{Lorenzini}, or \cite[Section A.2]{NX}).

\begin{theorem}
\TheoremName{Goppa}
Let $m < N$. There exists a sequence of linear codes $C_0, \ldots, C_{m-1} \subseteq \bF_q^m$, where $C_a$ has parameters $[m, k_a, d_a]_q$, such that the following statements hold.
\begin{align}
\begin{array}{r}
C_a \subseteq C_{a+1} \quad\forall ~ 0 \leq a < m-1 
\\
\footnotesize \text{\normalfont \cite[Remark 4.1.7]{TVN}}
\end{array}
\\
\EquationName{GoppaDist}
\begin{array}{r}
d_a \geq m-a \quad\forall ~ 0 \leq a < m
\\
\footnotesize \text{\normalfont \cite[Theorem 4.1.1]{TVN}}
\end{array}
\\
\EquationName{GoppaDim}
\begin{array}{r}
k_a \geq a-g+1 \quad\forall ~ 0 \leq a < m, \\
\text{\normalfont and equality occurs for $a \geq 2g-1$} \\
\footnotesize \text{\normalfont \cite[Remark 4.1.4]{TVN}}
\end{array} \\
\EquationName{GoppaDual}
\begin{array}{r}
k_a ^ \perp + d_a ^ \perp \geq m -g + 1 \quad~~ \forall ~ 0 \leq a < m \\
\footnotesize \text{\normalfont \cite[Exercise 4.1.27]{TVN}}
\end{array}
\end{align}

\end{theorem}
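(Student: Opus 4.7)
The plan is to construct the $C_a$'s as a sequence of one-point algebraic geometry codes on the curve $X$ and then read off each of the four listed properties from standard facts about Riemann--Roch spaces, each of which is already attributed within the statement itself to a specific passage of \cite{TVN}. So the proof is less an original argument than an orchestration of classical AG-code theory with a concrete choice of divisors.

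Since $m < N$, I can fix $m$ distinct $\bF_q$-rational points $P_1, \ldots, P_m$ on $X$ together with one further rational point $Q \notin \set{P_1, \ldots, P_m}$. Let $D := P_1 + \cdots + P_m$, and for $0 \leq a \leq m-1$ let $G_a := aQ$. Define
\[
    C_a ~:=~ C_{\mathcal{L}}(D, G_a) ~=~ \setst{(f(P_1), \ldots, f(P_m))}{f \in \mathcal{L}(G_a)} ~\subseteq~ \bF_q^m,
\]
where $\mathcal{L}(G_a)$ denotes the Riemann--Roch space of rational functions whose pole divisor is bounded by $G_a$. The key point is that the divisors $G_0 \leq G_1 \leq \cdots$ form an increasing chain whose supports are disjoint from $D$, so evaluation at the $P_i$'s is well-defined on every $\mathcal{L}(G_a)$.

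With these choices, the three primal properties fall out in one step each. The nesting $C_a \subseteq C_{a+1}$ is inherited from the inclusion $\mathcal{L}(aQ) \subseteq \mathcal{L}((a+1)Q)$. The distance bound $d_a \geq m-a$ follows because every nonzero $f \in \mathcal{L}(G_a)$ has at most $\deg(G_a) = a$ zeros, so at most $a$ of its evaluations can vanish. For the dimension, the evaluation map $\mathcal{L}(G_a) \to \bF_q^m$ is injective because its kernel lies in $\mathcal{L}(G_a - D)$, which is trivial once $\deg(G_a - D) = a-m < 0$; combining this with Riemann--Roch gives $k_a = \dim \mathcal{L}(G_a) \geq a - g + 1$, with equality whenever $a \geq 2g - 1$ (since then $\mathcal{L}(K - G_a) = 0$ for any canonical divisor $K$).

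For the dual bound \eqref{eq:GoppaDual} I would invoke the standard fact that $C_a^\perp$ is itself an algebraic geometry code on the same curve, realized via differentials as $C_{\Omega}(D, G_a)$, of the same genus $g$. The Goppa-type inequality that, in the primal case, already gives $k_a + d_a \geq m - g + 1$ (just by summing the dimension and distance bounds above), when applied to this dual description immediately yields $k_a^\perp + d_a^\perp \geq m - g + 1$. I expect this duality step to be the only place where some care is required, since it means passing between $\mathcal{L}$- and $\Omega$-type AG codes and keeping track of the canonical divisor of degree $2g-2$; the remaining properties are one-line consequences of Riemann--Roch.
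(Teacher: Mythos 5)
Your construction --- the one-point codes $C_a = C_{\mathcal{L}}(D, aQ)$ with $D = P_1+\cdots+P_m$ and $Q$ an extra rational point (which is exactly where the hypothesis $m<N$ is used) --- is the standard one and is precisely what the cited passages of \cite{TVN} are about; the paper itself offers no proof of this theorem beyond those citations, so your proposal is a reconstruction of the classical arguments. Items (1)--(3) are correct as you derive them.

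The step that does not go through ``immediately'' is \eqref{eq:GoppaDual} in the range $0 \le a \le 2g-2$. Write $i(aQ)=\dim\mathcal{L}(K-aQ)$ for the index of speciality, so that Riemann--Roch gives $k_a=\ell(aQ)=a-g+1+i(aQ)$. The $\Omega$-description of the dual yields $k_a^\perp = m-k_a = m-a+g-1-i(aQ)$ and $d_a^\perp \ge \deg(aQ)-2g+2 = a-2g+2$, and the sum of these two bounds is $m-g+1-i(aQ)$: applying the primal Goppa inequality to the dual description loses exactly $i(aQ)$. (Equivalently, if you realize $C_a^\perp$ as $C_{\mathcal{L}}(D, K+D-aQ)$, the new divisor has degree $2g-2+m-a \ge m$, the evaluation map is no longer injective once $\ell(K-aQ)>0$, and the dimension lower bound you would need to sum is off by that same quantity.) So for special $aQ$ an extra ingredient is required; Clifford's theorem closes the gap: if $i(aQ)>0$ then $k_a=\ell(aQ)\le a/2+1\le g$, whence $k_a^\perp \ge m-g$ and $d_a^\perp\ge 1$ already give $k_a^\perp+d_a^\perp\ge m-g+1$. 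For $a\ge 2g-1$ one has $i(aQ)=0$ and your argument is complete --- and this is the only range the paper actually uses, since it takes $a=u,u+1$ with $u=2g-1+t$.
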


Naively one might imagine that $C_a \subsetneq C_{a+1}$ for all $a$.
However that is not necessarily the case since \eqref{eq:GoppaDim} does not exactly specify the dimension: for small $a$ it only provides a lower bound.
However, this will not be problematic for our purposes.
We will only consider $C_a$ for $a \geq 2g-1$, in which case \eqref{eq:GoppaDim} determines the dimension exactly.

Clearly, for a curve of genus $0$, each code $C_a$ must be an MDS code,
since adding \eqref{eq:GoppaDist} and \eqref{eq:GoppaDim} shows that the Singleton bound is tight.
More generally, the AG code construction is most efficient when it is based on a curve for which $g/N$ is as small as possible.
However this ratio cannot be arbitrarily small.
It is known\footnote{This follows from the Drinfeld-Vl\u{a}du\c{t} theorem,
but is weaker since we do not need exact constants.}
that $g/N \geq c/\sqrt{q}$ for some constant $c>0$ and sufficiently large $g$.
In fact, this bound is asymptotically tight.
Various explicit curves asymptotically matching this bound have been discovered by Tsfasman, Vl\u{a}du\c{t} and Zink~\cite{tsfasman1982modular}, Garcia and Stichtenoth~\cite{garcia1995tower}, and others.

Using such curves, and letting $m = \Omega(N)$, one may see from \eqref{eq:GoppaDist} and \eqref{eq:GoppaDim} that the Singleton bound nearly holds, but there is a ``defect'' of $g=O(m/\sqrt{q})$.
The rough idea of our analysis is to choose parameters such that $\sqrt{q} = \Theta(m/t)$, so that the defect is only $O(t)$, which will be acceptable for our purposes.

There is a slight complication.
The use of algebraic curves in coding theory primarily concerns limiting behaviour of the rate/distance tradeoff as the code length $m$ tends to infinity.
Since we wish to construct orthogonal arrays for \emph{all} parameters $m,n,t$, we must ensure that curves exist with values of $N$ that are sufficiently dense. 
It is not necessarily true that all curve families that have previously been used with AG codes will be suitable for our purposes.
Fortunately, the modular curves discussed in the following theorem are suitably dense.

\begin{theorem}[{\protect \cite{KTV}, or \cite[page~18]{TVN2}}]
\TheoremName{KTV}
Let $p \neq \ell$ be distinct primes with $p \neq 11$. Then the classical modular curve $X_0(11\ell)$ over $\bF_{p^2}$ has at least $N= (p-1)(\ell+1)$ rational points and has genus $g = \ell$.
Furthermore, the generator matrix of the codes made from such curves can be constructed in time $\poly(m)$.
\end{theorem}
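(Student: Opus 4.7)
The statement asserts three facts about $X_0(11\ell)/\bF_{p^2}$: (i) the genus equals $\ell$; (ii) there are at least $(p-1)(\ell+1)$ rational points; and (iii) codes on this curve admit a polynomial-time generator-matrix algorithm. I would handle these three pieces separately, drawing on classical modular-curve theory and effective algebraic geometry.

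For the \textbf{genus}, I would invoke the standard formula
\[
g(X_0(N)) \;=\; 1 + \frac{\mu}{12} - \frac{\nu_2}{4} - \frac{\nu_3}{3} - \frac{\nu_\infty}{2},
\]
where $\mu = [\mathrm{PSL}_2(\bZ):\Gamma_0(N)]$, $\nu_\infty$ counts cusps, and $\nu_2, \nu_3$ count elliptic points of order $2$ and $3$. For $N = 11\ell$ with $\ell \neq 11$ prime, one computes $\mu = N \prod_{q \mid N}(1 + 1/q) = 12(\ell+1)$ and $\nu_\infty = 2^{\omega(N)} = 4$. The multiplicative formulas for $\nu_2$ and $\nu_3$ each contain the factors $1 + \left(\tfrac{-1}{11}\right)$ and $1 + \left(\tfrac{-3}{11}\right)$ respectively; since $\left(\tfrac{-1}{11}\right) = -1$ and $\left(\tfrac{-3}{11}\right) = -1$, both $\nu_2$ and $\nu_3$ vanish. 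The formula then collapses to $g = 1 + (\ell+1) - 2 = \ell$.

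For the \textbf{point count}, I would exploit the supersingular locus. By Deuring's theorem every supersingular elliptic curve $E/\bar{\bF}_p$ is defined over $\bF_{p^2}$; moreover, for $N$ coprime to $p$, the Frobenius $\pi_{p^2}$ acts as the scalar $-p$ on $E[N]$ and therefore fixes every cyclic subgroup. Hence every moduli point of $X_0(11\ell)$ sitting above a supersingular $j$-invariant is $\bF_{p^2}$-rational. Each such $E$ has exactly $(11+1)(\ell+1) = 12(\ell+1)$ cyclic subgroups of order $11\ell$, and summing over the supersingular locus via the Eichler mass formula $\sum_{E \text{ ss}} 1/|\mathrm{Aut}(E)| = (p-1)/24$, after accounting for the $\mathrm{Aut}(E)/\{\pm 1\}$-action on subgroups, yields at least $(p-1)(\ell+1)$ supersingular rational points.

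For the \textbf{algorithmic} part, I would compute the modular polynomials $\Phi_{11}(X,Y)$ and $\Phi_\ell(X,Y)$ (whose coefficients have size polynomial in $\ell$), use their joint zero set to produce a plane model of $X_0(11\ell)$, and enumerate $\bF_{p^2}$-rational points by iterating over supersingular $j$-invariants (detectable via the Hasse invariant). Given a plane model and a divisor $D$ of degree at most $m$, a basis of the Riemann-Roch space $L(D)$ can be computed in time $\poly(m,g)$ using the Hess or Brill-Noether algorithm, and evaluating this basis at the rational points yields the generator matrix. The hard part is the point count: supersingular curves with extra automorphisms ($j=0$ or $j=1728$, depending on $p \bmod 12$) require careful bookkeeping to verify that the clean lower bound $(p-1)(\ell+1)$ persists uniformly across all residue classes of $p$.
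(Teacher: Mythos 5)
The paper does not actually prove this theorem: it is imported verbatim from the cited sources (Katsman--Tsfasman--Vl\u{a}du\c{t} \cite{KTV} and \cite[page~18]{TVN2}), so there is no internal proof to compare against. Your reconstruction follows the same standard route as those sources and is essentially correct. The genus computation is right: for $N=11\ell$ squarefree one has $\mu=12(\ell+1)$ and $\nu_\infty=4$, and the factor $11$ is chosen precisely because $\bigl(\tfrac{-1}{11}\bigr)=\bigl(\tfrac{-3}{11}\bigr)=-1$ forces $\nu_2=\nu_3=0$, collapsing the formula to $g=\ell$. (Note this silently requires $\ell\neq 11$; the theorem statement omits that hypothesis, and the paper's choice of $\ell$ as the smallest prime at least $t$ does not guarantee it when $t\in\{8,\dots,11\}$ --- a defect of the statement, not of your argument.) The point count is also the standard supersingular argument; two touch-ups: Deuring's theorem gives, for each supersingular $j$-invariant, a \emph{choice} of model over $\bF_{p^2}$ on which Frobenius acts as the scalar $-p$ (not every $\bF_{p^2}$-form has this property), and the ``careful bookkeeping'' you worry about at $j=0,1728$ is unnecessary for a lower bound: the number of $\mathrm{Aut}(E)/\{\pm 1\}$-orbits of cyclic subgroups of order $11\ell$ is at least $24(\ell+1)/|\mathrm{Aut}(E)|$, and summing against the Eichler mass formula $\sum_E 1/|\mathrm{Aut}(E)|=(p-1)/24$ yields $(p-1)(\ell+1)$ with no case analysis. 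The weakest clause is the algorithmic one: the plane model cut out by $\Phi_{11}$ and $\Phi_\ell$ is highly singular, and making the Brill--Noether/Hess machinery run in $\poly(m)$ time on it (desingularization, exhibiting the evaluation divisor, identifying the rational points on the smooth model) is exactly the nontrivial content of the cited constructive papers; your sketch points at the right tools but should not be read as a self-contained proof of that part.
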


We remark that the celebrated work of \citep{tsfasman1982modular} also used the classical modular curve $X_0(\cdot)$ over $\bF_{p^2}$, but they did not require the genus to be a multiple of $11$.

\paragraph{Using AG codes for orthogonal arrays.}
Next we explain how the AG codes can be used to construct orthogonal arrays.
Let $p$ be the smallest prime satisfying
\[ 
p \equiv 1 \pmod n
\qquad\text{and}\qquad
p \geq 11 + m/t.\]
Clearly $p \neq 11$. By Linnik's theorem (\Corollary{Linnik}), $p \leq c_L (n+11+m/t)^{5}$.
Let $q = p^2$ and note that $q \equiv 1 \pmod n$.

We choose $\ell$ to be the smallest prime such that
$\ell \geq t$;
by Bertrand's postulate $\ell \leq 2t$.
By \Theorem{KTV}, the classical modular curve $X_0(11\ell)$ over $\bF_q$ satisfies 
\begin{equation}
\EquationName{ClassicalModular}
N = {(p-1)(\ell+1)} ~>~ m \qquad \text{and}\qquad g = \ell \leq 2t.
\end{equation}
Given this curve, \Theorem{Goppa} guarantees existence of a sequence of codes with various parameters.
We will only use the codes $C_u$ and $C_{u+1}$, where
\begin{equation}
\EquationName{UDef}
u ~=~ 2\ell - 1 + t ~=~ 2g -1 + t.
\end{equation} 
The following claim performs some simple calculations in preparation for using \Corollary{BuildOAWorks}.

\begin{claim} \mbox{}

\begin{enumerate}
\item $C_u \subsetneq C_{u+1}$.
\item $d_u^\perp \geq t+1$.
\item $k_u \leq 3t$.
\item $d_{u+1} \geq m - 5t$.
\end{enumerate}
\end{claim}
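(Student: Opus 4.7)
The plan is to verify each of the four items by direct substitution into \Theorem{Goppa}, taking advantage of the fact that the chosen index $u = 2g - 1 + t$ lies in the regime $a \geq 2g - 1$ where \eqref{eq:GoppaDim} gives the dimension exactly, rather than only as a lower bound. Once this is observed, each of the four conclusions reduces to a one-line calculation.

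First I would note that since $u \geq 2g-1$ and also $u+1 \geq 2g-1$, \eqref{eq:GoppaDim} yields $k_u = u - g + 1 = g + t$ and $k_{u+1} = g + t + 1$. This immediately gives item~(1): the containment $C_u \subseteq C_{u+1}$ from \Theorem{Goppa} is strict because the dimensions differ. It also gives item~(3) after invoking $g = \ell \leq 2t$ from \eqref{eq:ClassicalModular}, namely $k_u = g + t \leq 3t$.

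For item~(2), I would combine \eqref{eq:GoppaDual} with the standard identity $k_u^\perp = m - k_u$, which yields
\[
d_u^\perp ~\geq~ (m - g + 1) - k_u^\perp ~=~ k_u - g + 1 ~=~ (g + t) - g + 1 ~=~ t + 1.
\]
For item~(4), I would apply \eqref{eq:GoppaDist} with index $u+1$: since $u + 1 = 2g + t \leq 2(2t) + t = 5t$ by \eqref{eq:ClassicalModular}, we obtain $d_{u+1} \geq m - (u+1) \geq m - 5t$.

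There is no substantial obstacle; the proof is bookkeeping. The only point requiring attention is verifying that $u$ lands in the ``equality'' range of \eqref{eq:GoppaDim}, which is needed so that $C_u \subsetneq C_{u+1}$ (item~1) rather than merely $C_u \subseteq C_{u+1}$. Everything else is a substitution of $g = \ell$, $g \leq 2t$, and $u = 2g - 1 + t$ into the inequalities provided by \Theorem{Goppa}.
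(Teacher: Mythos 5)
Your proof is correct and follows essentially the same route as the paper: use $u=2g-1+t\geq 2g-1$ to get exact dimensions from \eqref{eq:GoppaDim} (hence strict containment and $k_u\leq 3t$), combine \eqref{eq:GoppaDual} with $k_u^\perp = m-k_u$ for item (2), and plug $u+1\leq 5t$ into \eqref{eq:GoppaDist} for item (4). No gaps.
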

\begin{proof}\mbox{}

\begin{enumerate}
\item By \eqref{eq:UDef} we have $u > 2g-1$, so \eqref{eq:GoppaDim} holds with equality for $a \in \set{u,u+1}$, and therefore $k_u < k_{u+1}$.

\item $d_u^\perp
 \geq m - k_u^\perp - g + 1
 = k_u - g + 1
 \geq u - 2g + 2
 = t+1,$
by \eqref{eq:GoppaDual},
\eqref{eq:GoppaDim}, and \eqref{eq:UDef}.

\item By \eqref{eq:UDef} we have $u > 2g-1$, so \eqref{eq:GoppaDim} holds with equality for $a=u$. Thus
\[
k_u ~=~ u-g+1
 ~=~ (2g-1+t)-g+1
 ~\leq 3t,
\]
by \eqref{eq:GoppaDim}, \eqref{eq:UDef}, and \eqref{eq:ClassicalModular}.

\item $d_{u+1}
    \geq m-u-1
    = m-(2g-1+t)-1
    \geq m-5t, $
by
    \eqref{eq:GoppaDist},
    \eqref{eq:UDef}, and
   \eqref{eq:ClassicalModular}.
   \qedhere
\end{enumerate}
\end{proof}

We apply \Corollary{BuildOAWorks} to $C_u$ and $C_{u+1}$ with $k \leq 3t$
and $d_2 \geq m-5t$.
Thus, there is an $[s,m,n,t]$ orthogonal array with 
\[
s ~\leq~ n^{5t} q^{3t}
 ~\leq~ n^{5t} \big(c(n+11+m/t)^{5} \big)^{6t}.
\]

Regarding the algorithmic efficiency, \Theorem{KTV} above states that the generator matrix for $C_u$ and $C_{u+1}$ can be constructed in time $\poly(m)$.
Given those matrices, a vector $v \in C_{u+1} \setminus C_u$ (as required by \Corollary{BuildOAWorks}) can be found in $\poly(sm)$ time. Since \Algorithm{BuildOA2} also runs in $\poly(sm)$, it follows that the construction of \Corollary{BuildOAWorks} runs in $\poly(sm)$.
This concludes the proof of \Theorem{AGOA}.

\section{Implementing a \SU{t} hash function}
\SectionName{Hash}

Although orthogonal arrays and \SU{t} hash functions are mathematically equivalent (see \Claim{Equiv}), a key difference is the efficiency of their implementations.
A hash function needs to be represented in little space, and the algorithms for constructing and evaluating it should be fast. 

In \Algorithm{hash_alg}, we present pseudocode for a \SU{t} hash function based on the orthogonal array construction of Sections~\ref{sec:GenericConstruction} and \ref{sec:RS}.
We assume that $\Uniform(S)$ returns a uniformly random element of the finite set $S$ in $O(1)$ time.

\begin{algorithm}[t!]
\caption{Pseudocode for a \SU{t} hash function.
This pseudocode involves the constant $\nu$ (from \Theorem{q_linnik}), which is effectively computable (see Remark~\ref{rem:comput}).}
\AlgorithmName{hash_alg}
\block{
class \HashNoSpace:
}{
    dictionary $D$ \\
    $\rhd$ $D$ has size at most $t$, so it can be implemented as an array\\
    int $n,m,t,p$ \\
    int $a[0..t-1]$ \\
    \\
    \Block{\CnstrNoSpace(int $n, ~m, ~t$)}
    {
        Store parameters $n, m, t$ in the object \\
        Let $\eta \gets n \cdot \ceil{m/n}$\\
        Find a prime $p \in [\eta+1, \eta^\nu]$ such that $p \equiv 1 \pmod {\eta}$ \\
        $\rhd$ Generate coefficients of a random polynomial \\
        \mbox{}\hspace{6pt} of degree-$(t\!-\!1)$ \\
        \Block{\MyFor $i \gets 0 \ldots t-1$}
        {
            $a_i \gets \Uniform(\bF _ p)$
        } \\
        $D \leftarrow$ empty dictionary
    } \\
    \\
    $\rhd$ Evaluate the hash function at input $x \in [m]$ \\
    \Block{\EvalNoSpace(int $x$)}
    {
        \Block{\MyIf $x \in \mathrm{keys}(D)$}
        {
            $\rhd$ The hash value of $x$ is already in the dictionary \\
            \MyReturn $D[x]$
        } \\
        $\rhd$ Evaluate the chosen polynomial at $x$ (over $\bF_p$) \\
        Let $y \gets \sum_{i=0}^{t-1} a _ i x ^ i$ \\
        \Block{\MyIf $y=x^t$
            \quad$\rhd$ Here $x^t$ is the ``bad value'' $b(x)$}
        {
            $\rhd$ Choose a replacement value randomly \\
            Let $U \gets \Uniform([n])$ \\
            $\rhd$ Store the hash value of $x$ in the dictionary \\
            $D[x] \gets U$ \\
		\MyReturn $U$
        } \\
        $\rhd$ Return $\phi_{b(x)}(y)$ \\
        \MyReturn $1+(((y+p-1-x^t) \text{~mod~} p) \text{~mod~} n)$
    }
}
\end{algorithm}

\begin{theorem}
\TheoremName{BasicHash}
In the pseudocode of \Algorithm{hash_alg}:
\begin{itemize}
    \item \Cnstr has expected runtime $O\big(t+(n+m)^\eps\big)$ for every $\eps>0$.
    \item \Eval has runtime $O(t)$.
    \item The space for a \Hash object is $O(t \log(n+m))$ bits.
\end{itemize} 
Assuming the generalized Riemann hypothesis, the time for \Cnstr is $O(t + \log(n+m))$.
\end{theorem}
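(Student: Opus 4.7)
The plan is to verify the four bullet points in turn, with only the prime-finding step of \Cnstr requiring any nontrivial number-theoretic input; the remaining parts follow from elementary counting.

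For \Cnstr, the work decomposes into (i)~finding the prime $p$, (ii)~sampling $t$ coefficients uniformly from $\bF_p$, and (iii)~initializing the empty dictionary. Items (ii) and (iii) clearly take $O(t)$ time. For (i), I would sample candidates uniformly at random from $\setst{1+\eta k}{k \geq 1}$ lying in $[\eta+1,\eta^\nu]$, testing each for primality in polylogarithmic time via AKS. A quantitative Linnik-type theorem (\Theorem{q_linnik}) guarantees that the density of primes in this set is at least $(n+m)^{-\eps}$ for any prescribed $\eps>0$, so the expected number of trials is $(n+m)^\eps$ and the overall expected cost is $O(t+(n+m)^\eps)$. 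Under GRH, the stronger bound $p \leq (\eta \ln \eta)^2$ from \Corollary{Linnik}, combined with the prime number theorem for arithmetic progressions, yields a density of primes of order $1/\log(n+m)$ among candidates, reducing the prime-finding cost to $\polylog(n+m)$.

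For \Eval, the central observation is that $|D| \leq t$ throughout the lifetime of the hash object, which justifies implementing $D$ as a length-$t$ array. Indeed, a fresh entry is inserted into $D$ exactly when $\sum_{i=0}^{t-1} a_i x^i = x^t$ in $\bF_p$, i.e., when $x$ is a root of the degree-$t$ polynomial $Q(z) = -z^t + \sum_{i<t} a_i z^i$. Since $Q$ is nonzero of degree $t$, it has at most $t$ roots in $\bF_p$, and since $m < p$, at most $t$ distinct inputs ever cause an insertion. Given this, dictionary lookups cost $O(t)$ via a linear scan, Horner's rule evaluates the polynomial in $O(t)$ field operations, fast exponentiation computes $x^t$ in $O(\log t)$ further operations, and the final modular remap is $O(1)$. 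Since $p$ fits in $O(1)$ machine words, each $\bF_p$ operation takes constant time, giving $O(t)$ in total.

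For space, one simply tallies the components: the parameters $n,m,t,p$ occupy $O(\log(n+m))$ bits in total; the coefficient array holds $t$ elements of $\bF_p$, for $O(t\log(n+m))$ bits; and the dictionary holds at most $t$ pairs from $[m]\times[n]$, for another $O(t\log(n+m))$ bits. The sum is $O(t\log(n+m))$ as claimed. The main obstacle in the whole proof is therefore concentrated in the prime-finding subroutine of \Cnstr: one must supply a quantitative form of Linnik's theorem controlling not just the existence but also the density of primes $\equiv 1 \pmod \eta$ in the short interval $[\eta+1,\eta^\nu]$. Every other ingredient is bookkeeping.
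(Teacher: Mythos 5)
Your treatment of \Eval, the space bound, and the unconditional analysis of \Cnstr is essentially the paper's own argument: the bound $|D|\le t$ via the at-most-$t$ roots of the degree-$t$ polynomial $\sum_{i<t}a_i x^i - x^t$ over $\bF_p$, the resulting $O(t)$ evaluation time and $O(t\log(n+m))$ space tally, and the rejection sampling of candidates $\equiv 1 \pmod \eta$ in $[\eta+1,\eta^\nu]$ analyzed through \Theorem{q_linnik} all match, so for those parts there is nothing to add.

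The gap is in your GRH branch. You propose to shrink the search interval to $[\eta+1,(\eta\ln\eta)^2]$ (the GRH least-prime bound behind \Corollary{Linnik}) and to extract a prime density of order $1/\log(n+m)$ there from ``the prime number theorem for arithmetic progressions.'' That density claim is not justified at this range: under GRH the standard estimate is $\pi(x;\eta,1)=\mathrm{Li}(x)/\phi(\eta)+O(\sqrt{x}\,\log^2 x)$, and at $x=(\eta\ln\eta)^2$ the error term, of order $\eta\log^3\eta$, swamps the main term, of order $\eta\log\eta$ (up to $\log\log$ factors), so this route does not even guarantee a single prime in that interval, let alone a count controlling the expected number of random trials; the Lamzouri-type result gives existence of one prime by a more delicate argument and provides no such count. (Your modification also changes the algorithm, whereas the theorem is about the pseudocode as written, whose search range is $[\eta+1,\eta^\nu]$ regardless of GRH.) The paper instead keeps $x=\eta^\nu$ and invokes the second case of \Corollary{IK}, which GRH guarantees, giving $\pi(x;\eta,1)\ge x/(2\phi(\eta)\ln x)$, hence prime density at least $1/(2\ln x)$ among the roughly $x/\eta$ candidates and expected $O(\log(n+m))$ sampling rounds, which is the stated $O(t+\log(n+m))$ bound (up to the polylogarithmic cost of each primality test). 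To repair your version, either argue via \Corollary{IK} as the paper does, or enlarge the GRH search range to a power of $\eta$ strictly greater than $2$, where the GRH main term dominates the error and your density argument goes through.
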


\newcommand{\NumPrimes}[3]{\pi(#1;#2,#3)}

The proof involves the following notation,
for $x \in \bR$, $\eta, a \in \Z$, $\eta \geq 2$,
\begin{align*}
S_{x,\eta,a}
 &~=~ \setst{ i \in \bN}{i \equiv a \Pmod \eta ,\: 
  1 < i \leq x } \\
\NumPrimes{x}{\eta}{a}
 &~=~ \card{\setst{p \in S_{x,\eta,a} }{ \text{\normalfont $p$ is prime}}}
\end{align*} 
as well as the Euler totient function $\phi(\eta)$, which counts the number of positive integers up to $\eta$ that are relatively prime to $\eta$.
Throughout this section, we assume that $a$ and $\eta$ are coprime.

\begin{proof}
It is not difficult to see that this pseudocode implements a \SU{t} hash function, since it implements the construction of Sections~\ref{sec:GenericConstruction} and \ref{sec:RS}. We now summarize the main ideas.

The key task of the \Cnstr is to find a prime $p$ satisfying $p \equiv 1 \pmod n$ and $\eta+1 \leq p \leq \eta^\nu$, for some constant $\nu$ defined in \Theorem{q_linnik}.
The runtime of that step is discussed extensively below.
It then randomly generates the coefficients of a uniformly random degree-$(t\!-\!1)$ polynomial, which is analogous to uniformly selecting the function $h \in \cH_q$ in the proof of \Lemma{BuildOAWorks}. (Here $q=p$.)
The ``bad vector'' $b$ is defined to be the evaluations of the degree-$t$ monomial, i.e., $b(x) = x^t$. This is a codeword in  $C_{t}$ but not in $C_{t-1}$, and so it can be used as in the proof of \Theorem{RSOA}.   
The \Eval function calculates $h(x)$, and checks whether it equals the forbidden value $b(x)$. If so, it replaces it with a random value and caches that in the dictionary $D$. Otherwise it returns $\phi_{b(x)}(h(x))$.

We claim that there are at most $t$ inputs $x \in [m]$ for which $\sum_{i=0}^{t-1} a_i x^i = x^t$
(i.e., $h(x)=b(x)$).
This follows since the polynomial $h(x) - b(x)$ is of degree $t$ and hence has at most $t$ roots.
This implies that the dictionary $D$ will have size at most $t$.
Each key in the dictionary is in $[m]$ and each value is in $[n]$, so the space required for $D$ is $O(t \log(n+m))$ bits.
Since $p \leq (n+m)^{\nu}$, and each $a_i \in \bF_p$, the space required for all other parameters of the \Hash object is $O(t \log(n+m))$ bits.

\Eval can clearly be implemented in $O(t)$ time, even if the dictionary $D$ is implemented as an unsorted array.

The main challenge of the analysis is the time required to find the prime $p$. 
In \Section{RS} the time to find the prime $p$ was negligible compared to the time to build the orthogonal array, so a (deterministic) algorithm with runtime $O\big((n+m)^{4}\big)$ was sufficient.
To obtain an improved runtime for \Cnstr, we will use random sampling.
\Theorem{q_linnik} below shows that there exists a constant $\nu$ such that,
assuming $x \geq \eta^\nu$, we have
\[
 \pi(x; \eta, a)
 ~\geq~ \frac{c_\eps x}{\phi(\eta) \ln(x) \eta^\eps}
 ~\geq~ \frac{c_\eps x}{\eta^{1+\eps} \ln x},
\]
using the trivial bound $\phi(\eta) \leq \eta$.

Let us now fix $x = \eta^\nu$ and $a=1$.
The algorithm repeatedly samples integers at random from the set $S_{x,\eta,a}$, testing each for primality.
The expected number of iterations until finding a prime is
\begin{equation}
\EquationName{NumIters}
\frac{\card{S_{x,\eta,a}}}{\NumPrimes{x}{\eta}{a}}
~\leq~
\frac{x/\eta}{c_\eps x/(\eta^{1+\eps} \ln x)}
~=~  \frac{\eta^{\eps} \ln x}{c_\eps},
\end{equation}
since $x \geq \eta$.
Each primality test requires $O(\polylog(x))$ time.
Using that $\eta \leq (n+m)$, we have shown that the expected time to find the prime $p$ is $O((n+m)^\eps)$ for every $\epsilon > 0$.

Assuming the generalized Riemann hypothesis, \Corollary{IK} yields
\[
\NumPrimes{x}{\eta}{a} ~\geq~ \frac{x}{2 \eta \ln x},
\]
so the expected number of iterations until finding the prime $p$ is 
\[
\frac{\card{S_{x,\eta,a}}}{\NumPrimes{x}{\eta}{a}}
~\leq~ \frac{2x/\eta}{x/\eta \ln x}
~=~ 2 \ln(x),
\]
which is $O(\log(n+m))$ since $x=\eta^\nu$.
\end{proof}

The next theorem follows from known results in the analytic number theory literature, as we explain below.

\begin{theorem}[Quantitative Linnik's Theorem]
\TheoremName{q_linnik}
There exists a universal constant $\nu \geq 1$ (independent of $\eta, \ep$) and a constant $c_\ep > 0$ such that, for any $\ep > 0$, any $x \geq \eta^\nu$,
\[
	\NumPrimes{x}{\eta}{a}
	~\geq~
	c_\ep \paren{ \frac x {\phi(\eta) \ln x} \cdot \frac 1 {\eta ^ {\ep}} }
 .
\]
\end{theorem}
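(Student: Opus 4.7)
The plan is to obtain this as a quantitative form of Linnik's theorem using standard analytic machinery: the explicit formula for $\psi(x;\eta,a)$, Linnik's log-free zero-density estimate, and Siegel's theorem on the exceptional zero.

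First, I would apply the explicit formula, which expresses $\psi(x;\eta,a) = \sum_{n \leq x,\; n \equiv a \pmod{\eta}} \Lambda(n)$ as a main term $x/\phi(\eta)$ plus a sum over the non-trivial zeros $\rho_\chi$ of the Dirichlet $L$-functions $L(s,\chi)$ for $\chi \bmod \eta$. The goal is to show the main term dominates once $x \geq \eta^\nu$ for some universal $\nu$. For every zero other than a potential exceptional one, Linnik's log-free density estimate bounds the number of zeros of the $L(s,\chi)$ in the rectangle $\sigma \geq 1-\lambda$, $|t| \leq T$, summed over all $\chi \bmod \eta$, by $(\eta T)^{C\lambda}$ for an absolute constant $C$. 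A standard argument partitioning the zeros into horizontal bands and applying this estimate in each band then shows that the total contribution of non-exceptional zeros is $o(x/\phi(\eta))$ once $\nu$ is chosen sufficiently large; any $\nu > 5$ suffices given Xylouris's bound on Linnik's constant.

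The delicate case is a possible Siegel zero $\beta_1$ of $L(s,\chi_1)$ for a real character $\chi_1$ mod $\eta$, whose contribution to $\psi(x;\eta,a)$ is approximately $-\chi_1(a)\, x^{\beta_1}/\phi(\eta)$. In the worst case $\chi_1(a)=1$, this almost cancels the main term, leaving $(x - x^{\beta_1})/\phi(\eta)$. Setting $y = (1-\beta_1)\log x$, we have $1 - x^{\beta_1-1} = 1 - e^{-y} \geq \min(y/2,\; 1/2)$. Siegel's theorem gives $1-\beta_1 \geq c_\epsilon/\eta^\epsilon$ with an ineffective $c_\epsilon > 0$, so in both regimes one obtains $\psi(x;\eta,a) \geq c'_\epsilon\, x \log x / (\phi(\eta)\, \eta^\epsilon)$ or stronger. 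A partial summation from $\psi$ to $\pi$ then yields the claimed bound on $\pi(x;\eta,a)$, absorbing one factor of $\log x$ into the denominator.

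The main obstacle is that the constant $c_\epsilon$ is ineffective, inheriting the nonconstructive nature of Siegel's theorem: making it explicit would resolve the long-standing Siegel zero conjecture. Fortunately, the application to \Theorem{BasicHash} only uses the existence of $c_\epsilon$ (to bound the expected number of random trials until a prime is found), not its numerical value, so this ineffectiveness causes no algorithmic issue. The only remaining work is to choose the constants $\nu$ and $C$ concretely, for which I would simply cite the quantitative forms of Linnik's density theorem and Heath-Brown--Xylouris's bound on Linnik's constant.
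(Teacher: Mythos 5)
Your overall route is the same as the paper's: split according to whether an exceptional real zero exists, use a quantitative prime-counting bound in arithmetic progressions for the non-exceptional part, and invoke Siegel's theorem to lower-bound $1-\beta_1$ in the exceptional case, accepting the ineffective constant exactly as discussed in Remark~\ref{rem:comput}. The difference is that the paper simply cites the packaged dichotomy of \cite[Theorems 18.6, 18.7]{IK04} (stated as \Corollary{IK}) and then applies \Theorem{Siegel}, whereas you propose to re-derive that analytic input from the explicit formula and Linnik's log-free zero-density estimate.

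It is in that re-derivation that there is a genuine gap. You argue that the contribution of the non-exceptional zeros is $o(x/\phi(\eta))$ once $\nu$ is large, and then treat the possible Siegel zero separately. But in the exceptional case the main term is reduced to $x(1-x^{\beta_1-1})/\phi(\eta)$, which is of order $x(1-\beta_1)\ln x/\phi(\eta)$ and can be smaller than $x/\phi(\eta)$ by roughly a factor $(\log^2\eta)/\sqrt{\eta}$, since the only effective lower bound on $1-\beta_1$ is of that quality. A log-free density estimate combined with the classical zero-free region only bounds the non-exceptional zeros' contribution by $\delta\cdot x/\phi(\eta)$ for a small \emph{constant} $\delta$ depending on $\nu$; it does not make that contribution small compared with the reduced main term when $(1-\beta_1)\ln x$ is tiny. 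To get below the reduced main term one needs the Deuring--Heilbronn zero-repulsion phenomenon (when $\beta_1$ is very close to $1$, all other zeros are pushed away from $\sigma=1$ by an amount involving $\log\big(1/((1-\beta_1)\log\eta)\big)$). That repulsion is precisely the hard content behind \cite[Theorem 18.6]{IK04} and is what produces the factor $(1-\beta)$ in the first case of \Corollary{IK}; without it your exceptional-case estimate does not follow. (A minor additional slip: the claim that $\psi(x;\eta,a)\ge c'_\epsilon x\log x/(\phi(\eta)\eta^{\epsilon})$ holds ``or stronger'' in both regimes fails when $(1-\beta_1)\log x>1$, where one only gets $x/(2\phi(\eta))$; this is harmless for the final bound on $\pi(x;\eta,a)$ because a factor $\ln x$ is lost anyway in passing from $\psi$ to $\pi$.) If you do not wish to carry out the repulsion argument, the fix is exactly the paper's shortcut: quote the dichotomy of \cite[Theorems 18.6, 18.7]{IK04} and finish with \Theorem{Siegel} as you already do.
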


The distribution of the primes in an arithmetic progression is closely tied to the well-studied $L$-functions of characters mod $\eta$. 
As with many results in analytic number theory, there are different cases depending on whether the $L$-functions have so-called ``exceptional zeros''.
Whether or not these zeros exist is unknown, and relates to the generalized Riemann hypothesis.
We will use the following result, which is a corollary of \cite[Theorems 18.6, 18.7]{IK04}.

\begin{corollary}
\CorollaryName{IK}
There exist universal constants $\nu, c_1, c_2, c_3$ (independent of $x, \eta, a$) such that, if $x \geq \eta^\nu$ and $\eta \geq c_3$, then the following holds.
\begin{itemize}
\item If there exists a real character $\chi \pmod \eta$ whose $L$-function has a real zero $\beta$ satisfying $1 - \beta \leq c_1 / 2 \log \eta$ then 
\[
	\NumPrimes{x}{\eta}{a} ~\geq~ c_2 \cdot \frac {x} {\phi(\eta)} \cdot (1-\beta).
\]
\item If there exists no such character $\chi$, then 
\[
	\NumPrimes{x}{\eta}{a}
	~\geq~
	\frac{1}{2} \cdot \frac x {\phi(\eta)\ln x}.
\] 
\end{itemize}
The generalized Riemann hypothesis implies that the second case holds.
\end{corollary}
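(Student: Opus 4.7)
The plan is to reduce the statement to \cite[Theorems 18.6, 18.7]{IK04}, which are stated for the Chebyshev-weighted function $\psi(x;\eta,a) := \sum_{n \leq x,\: n \equiv a \pmod{\eta}} \Lambda(n)$ rather than for $\NumPrimes{x}{\eta}{a}$ itself, and then to convert a lower bound on $\psi$ into one on $\pi$ in a routine way.

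First I would apply the explicit formula (IK 18.6) for $\psi(x;\eta,a)$. Using character orthogonality it writes $\psi(x;\eta,a)$ as a main term $x/\phi(\eta)$ plus a contribution $-\frac{1}{\phi(\eta)}\sum_\chi \bar\chi(a)\sum_{\rho_\chi}\frac{x^{\rho_\chi}}{\rho_\chi}$ from the nontrivial zeros of the Dirichlet $L$-functions modulo $\eta$, plus smaller errors. The classical Landau-type zero-free region controls every zero except possibly one real zero $\beta$ of one real character $\chi_1$. For $x \geq \eta^\nu$ with $\nu$ a sufficiently large absolute constant, the total contribution of all non-exceptional zeros is bounded by $O\!\left(x \exp(-c\sqrt{\ln x})\right)$, which is negligible compared to $x/\phi(\eta)$ as soon as $\eta \geq c_3$.

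In Case 2 of the corollary (no such exceptional zero exists), this immediately gives $\psi(x;\eta,a) \geq \tfrac{3}{4}\cdot x/\phi(\eta)$. In Case 1, IK 18.7 handles the contribution of the exceptional zero: combining it with the main term yields $\psi(x;\eta,a) \geq c_2'\cdot x(1-\beta)/\phi(\eta)$, since the cancellation between $x/\phi(\eta)$ and $-\bar\chi_1(a) x^\beta/(\phi(\eta)\beta)$ leaves a residual of order $(1-\beta)$ when $\beta$ is close to $1$ and $x \geq \eta^\nu$.

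The remaining step is to pass from $\psi$ to $\pi$ via the routine bound $\psi(x;\eta,a) - \NumPrimes{x}{\eta}{a}\ln x \leq \sum_{k \geq 2,\, p^k \leq x} \ln p = O(\sqrt{x}\ln x)$, which gives $\NumPrimes{x}{\eta}{a} \geq \psi(x;\eta,a)/\ln x - O(\sqrt{x})$. In Case 2, the $O(\sqrt{x})$ error is dwarfed by $3x/(4\phi(\eta)\ln x)$ for $x \geq \eta^\nu$, yielding the stated bound after adjusting constants. The main obstacle is in Case 1, where the lower bound $c_2' x(1-\beta)/(\phi(\eta)\ln x)$ could in principle be very small: one must absorb the $O(\sqrt{x})$ prime-power error into it. Here the classical Landau bound $1-\beta \gg 1/(\sqrt{\eta}\ln^2 \eta)$ for a real zero of a real character modulo $\eta$, together with $x \geq \eta^\nu$ for $\nu$ large enough, forces $(1-\beta)\cdot x/(\phi(\eta)\ln x) \gg \sqrt{x}$, so the error absorbs into a slightly smaller constant $c_2 < c_2'$. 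Finally, under GRH all nontrivial zeros of Dirichlet $L$-functions satisfy $\Re(s)=\tfrac{1}{2}$, so the hypothesis of Case 1 fails and we automatically land in Case 2.
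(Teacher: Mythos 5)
Your overall route is the same as the paper's: take the lower bounds of \cite[Theorems 18.6, 18.7]{IK04} for $\psi(x;\eta,a)$ in the exceptional and non-exceptional cases, then convert them into bounds on $\pi(x;\eta,a)$ by stripping off the prime-power contribution. The paper's own proof is exactly this one-line conversion (via $\psi(x;\eta,a) \leq \pi(x;\eta,a)\ln x$), and your explicit treatment of the $O(\sqrt{x}\,\ln x)$ prime-power term, absorbed in the exceptional case via Landau's bound $1-\beta \gg \eta^{-1/2}(\ln\eta)^{-2}$, is if anything more careful than the paper's remark; the GRH sentence is also fine.

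However, two steps as you describe them do not work. First, the analytic mechanism you sketch for the non-exceptional zeros is insufficient in the relevant range: with only the classical zero-free region, an error term of the shape $O\big(x\exp(-c\sqrt{\ln x})\big)$ beats the main term $x/\phi(\eta)$ only when $\exp(-c\sqrt{\ln x}) \ll 1/\eta$, i.e.\ when $\ln x \gg (\ln\eta)^2$, so $x \geq \exp\big(c'(\ln\eta)^2\big)$ --- far beyond $x=\eta^{\nu}$. Obtaining a nontrivial lower bound on $\psi(x;\eta,a)$ already at $x=\eta^{\nu}$ is precisely the hard content of Linnik's theorem and requires the log-free zero-density estimate and the Deuring--Heilbronn repulsion (IK Theorems 18.1--18.5), which is exactly what Theorems 18.6--18.7 package (they are the two packaged $\psi$-bounds, not ``the explicit formula''). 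So you must either quote 18.6/18.7 as black boxes, as the paper does, or run the full Linnik machinery; the argument as you wrote it would fail at $x=\eta^{\nu}$. Second, your Case 1 endpoint is weaker than the statement by a factor of $\ln x$: the cancellation $1-\bar\chi(a)x^{\beta-1}/\beta$ leaves a residual of size about $\min\{1,(1-\beta)\ln x\}$, so the exceptional-case bound carries a $\ln x$, namely $\psi(x;\eta,a) \gtrsim (1-\beta)\,x\ln x/\phi(\eta)$ in the regime permitted by $1-\beta \leq c_1/(2\log\eta)$; you record only $\psi(x;\eta,a) \geq c\,x(1-\beta)/\phi(\eta)$, and after dividing by $\ln x$ you land at $\pi(x;\eta,a) \gtrsim (1-\beta)x/(\phi(\eta)\ln x)$, not the claimed $\pi(x;\eta,a) \geq c_2(1-\beta)x/\phi(\eta)$. (Your weaker bound would still suffice for the downstream use in \Theorem{q_linnik}, but it does not prove the corollary as stated; keeping the $\ln x$ from IK Theorem 18.7 before converting fixes this.)
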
 

The results in \cite{IK04} are stated in terms of the von Mangoldt function $\Lambda$ and its sum $\psi$:
\begin{align*}
\Lambda(n) &~=~ \begin{cases} \ln p &\quad\text{if $n=p^k$ for a prime $p$ and integer $k \geq 1$} \\
0 &\quad\text{otherwise}.
\end{cases} \\
\psi(x;\eta,a) &~=~ \sum_{\substack{n \leq x  \\ n \equiv a \Pmod \eta}} \Lambda(n).
\end{align*}
\Corollary{IK} follows from those results using the simple inequality $\psi(x;\eta,a) \leq \NumPrimes{x}{\eta}{a} \cdot \ln(x)$.

Let us now return to the proof of \Theorem{q_linnik}.
If the second case of \Corollary{IK} applies, then \Theorem{q_linnik} is immediate.
If the first case of \Corollary{IK} applies, then we must ensure that the exceptional zero is significantly less than $1$.
The following is a classic result of Siegel \cite{Sig35} \cite[Corollary 11.15]{MV06}.

\begin{theorem}[Siegel, 1935]
\TheoremName{Siegel}
For every $\eps > 0$ there exists a constant $c_\eps$ such that, if $\chi$ is a real character$\Pmod \eta$ with $L(s , \chi)$ having a real zero $\beta$ then 
\[
	\beta ~<~ 1 - c_\eps \eta ^ {- \eps}.
\]
\end{theorem}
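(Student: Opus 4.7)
The plan is to apply the classical Siegel dichotomy argument, which famously produces a non-effective constant $c_\eps$. Fix $\eps > 0$ and consider two scenarios. If no real non-principal primitive character $\chi_1$ (of any modulus $\eta_1$) has a real zero of $L(s, \chi_1)$ exceeding $1 - \eps/4$, then every real zero $\beta$ of every such $L$-function trivially satisfies $\beta \leq 1 - \eps/4 < 1 - c_\eps \eta^{-\eps}$ for any sufficiently small $c_\eps$. In the complementary scenario we fix such an exceptional character $\chi_1 \pmod{\eta_1}$ with real zero $\beta_1 > 1 - \eps/4$ and use it as a pivot for all other characters.

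For any other real non-principal primitive character $\chi \pmod{\eta}$ with real zero $\beta$ of $L(s, \chi)$, introduce the auxiliary Dirichlet series
\[
F(s) ~=~ \zeta(s) \, L(s, \chi_1) \, L(s, \chi) \, L(s, \chi_1 \chi).
\]
The key structural fact is that $F$ has non-negative Dirichlet coefficients: since $\chi_1$ and $\chi$ take values in $\{0, \pm 1\}$, each local Euler factor of $F$ expands with non-negative coefficients. Moreover $F$ has a simple pole at $s = 1$ with positive residue $L(1, \chi_1) L(1, \chi) L(1, \chi_1 \chi)$, while the zero $\beta$ forces $F(\beta) = 0$.

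The analytic heart of the argument is to combine these two facts. Writing $F$ as a truncated Dirichlet sum plus an error controlled by convexity bounds for $L$-functions, and evaluating the resulting expression at $s = \beta$, one extracts an inequality of the form
\[
(1 - \beta) \cdot (\eta \, \eta_1)^{O(\eps)} ~\geq~ c \cdot L(1, \chi_1) \, L(1, \chi) \, L(1, \chi_1 \chi).
\]
Treating $L(1, \chi_1) > 0$ as a positive constant depending on $\chi_1$ via Dirichlet's non-vanishing theorem, and using the Polya--Vinogradov-based lower bound $L(1, \chi), L(1, \chi_1 \chi) \gg \eta^{-\eps/2}$, rearrangement yields $1 - \beta \geq c_\eps \eta^{-\eps}$, as required. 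A symmetric application of the same construction, pairing $\chi_1$ with any fixed auxiliary real character of small conductor, handles the bound for $\beta_1$ itself.

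The main obstacle, and the celebrated ineffective feature of Siegel's proof, is that we cannot decide between the two scenarios, nor identify the exceptional $\chi_1$ in the second. Consequently $c_\eps$ is not effectively computable from this proof strategy. This is acceptable here since \Theorem{q_linnik} needs only the existence of $c_\eps$, not an explicit value; the ineffectivity propagates harmlessly into the construction time bound of \Theorem{BasicHash}.
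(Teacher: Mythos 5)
The paper does not actually prove this statement---it is quoted as a classical black box, with citations to Siegel \cite{Sig35} and to \cite[Corollary 11.15]{MV06}---so what matters is whether your reconstruction of the classical argument stands on its own, and its analytic core has a genuine gap. With $F(s)=\zeta(s)L(s,\chi_1)L(s,\chi)L(s,\chi_1\chi)$, the non-negativity of the Dirichlet coefficients together with the truncation/convexity estimate gives, for $s$ slightly below $1$, a bound of the shape $F(s)\ge \tfrac12 - C\,\lambda\,(\eta\eta_1)^{O(1-s)}/(1-s)$, where $\lambda=L(1,\chi_1)L(1,\chi)L(1,\chi_1\chi)$ is the residue at $s=1$. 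Substituting a zero of one of the factors, which forces $F$ to vanish there, therefore yields a \emph{lower} bound on $\lambda$, namely $\lambda\gg(1-s)(\eta\eta_1)^{-O(1-s)}$; it does not yield your displayed inequality $(1-\beta)(\eta\eta_1)^{O(\eps)}\gg\lambda$, which points in the reverse direction. Evaluating at $\beta$, as you propose, gives only an upper bound on $1-\beta$ and is useless for the theorem; the classical proof instead evaluates $F$ at the pivot's zero $\beta_1$, where $1-\beta_1<\eps/4$ is a constant depending only on $\eps$, and thereby deduces $L(1,\chi)\gg_\eps \eta^{-O(\eps)}$ for every other real character $\chi$ (with an ineffective constant, since $\chi_1$ cannot be exhibited). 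The inequality you do write down is, in substance, nothing more than the mean-value bound $L(1,\chi)=L(1,\chi)-L(\beta,\chi)\ll(1-\beta)\log^2\eta$ multiplied by trivial upper bounds on the other two $L$-values, so it extracts no information from $F$ at all.

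The second, and fatal, problem is how you close the argument: you invoke a ``Polya--Vinogradov-based lower bound $L(1,\chi),\,L(1,\chi_1\chi)\gg\eta^{-\eps/2}$''. No such elementary bound exists; the unconditional effective bound (via the class number formula) is only $L(1,\chi)\gg\eta^{-1/2}$, which would give merely $1-\beta\gg\eta^{-1/2-o(1)}$, and a bound of the form $L(1,\chi)\gg_\eps\eta^{-\eps}$ \emph{is} the substance of Siegel's theorem, so this step is circular---indeed, if you had it, the whole pivot construction would be unnecessary, since $1-\beta\gg L(1,\chi)/\log^2\eta$ would finish immediately. The correct flow of the classical proof is: (i) the dichotomy on the existence of an exceptional pivot $(\chi_1,\beta_1)$, exactly as you set it up; (ii) from $F(\beta_1)=0$ and the lemma above, the ineffective lower bound $L(1,\chi)\gg_\eps\eta^{-O(\eps)}$ for all other real characters; (iii) conversion of that lower bound into the zero estimate via the mean value theorem with $|L'(\sigma,\chi)|\ll\log^2\eta$, with $\chi_1$ itself (and imprimitive characters) handled by shrinking $c_\eps$. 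Your dichotomy, your choice of $F$ and its positivity/pole structure, and your remarks on ineffectivity are all correct, but steps (ii)--(iii) are respectively replaced by a circular assumption and an inequality derived in the wrong direction, so the proposal as written does not prove the theorem.
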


Thus, in the first case of \Corollary{IK}, it holds that $1-\beta \geq c_\eps \eta^{-\eps}$, which completes the proof of \Theorem{q_linnik}.

\begin{remark}
\label{rem:comput}
The numerical value of constants $\nu, c_L, L, c_1, c_2, c_3$ satisfying the above properties can be explicitly computed ``given the time and will''; see \cite[Pages 427, 428 and property 18.90]{IK04}. Such constants are called ``effectively computable'' in the number theory literature. In contrast, the value of $c_\eps$ is ``ineffective'' for $\eps < \frac 1 2$ \cite[Theorem 5.28]{IK04}, meaning that the proof of the result does not yield a way to compute the constant.
However, note that implementing Algorithms~\ref{alg:BuildOA2} and \ref{alg:hash_alg} only requires knowing the value of the constants $\nu, c_L, L$.
The constant $c_\eps$ only appears in the analysis of \Theorem{BasicHash}. Specifically, from \eqref{eq:NumIters}, it appears inside the $O(.)$ bound on the runtime of \Cnstr.
\end{remark}

\section{Construction using random linear codes}
\SectionName{GV}

Sections~\ref{sec:RS} and \ref{sec:AG} use deterministic code constructions.
In this section we show that a random construction of linear codes can also be used to prove an non-explicit form of \Theorem{Main}, using an efficient, randomized algorithm.

\begin{lemma}
\LemmaName{MainOptimal}
There exists a universal constant $c > 0$ such that the following is true.
For any $m , n , t \in \N$, there exists a prime number $p$, a linear code $C \subseteq \bF_p^m$ and a function $b \colon [m] \to \bF _ p$ such that 
\begin{itemize}
        \item $p \equiv 1 \pmod n$
        and $p \leq c_L \big(n+(me/t)^3)^L$.
	\item the dimension of $C$ is at most $2t$.
        \item the distance of $C^\perp$ is at least $t+1$.
	\item $\Delta(b,v) \geq m-3t ~\forall v \in C$.
\end{itemize}
Moreover, there exists a randomized algorithm that
outputs $p$, $C$ and $b$ satisfying these conditions
and has expected runtime $(mn/t)^{O(t)}$.
\end{lemma}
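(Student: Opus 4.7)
The plan is a Gilbert--Varshamov--style probabilistic construction: pick $p$ using Linnik's theorem, then sample $C$ and $b$ at random and verify the desired conditions by independent union bounds. The key numerical choice is the exponent $3$ in $(me/t)^3$, which is chosen precisely so that a single prime $p$ satisfies both union bound inequalities simultaneously.

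First, I would apply \Corollary{Linnik} with $n$ unchanged and with the real number $(me/t)^3$ playing the role of its second argument. Since $m \geq t \geq 2$ implies $(me/t)^3 \geq 2$, this is legitimate and yields a prime $p$ satisfying $p \equiv 1 \pmod n$, $p \leq c_L(n+(me/t)^3)^L$, and $p > (me/t)^3$; the last inequality will provide the slack needed below.

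Next, I would sample a uniformly random matrix $G \in \bF_p^{2t \times m}$ and take $C$ to be its row span, so $\dim C \leq 2t$ automatically. The dual distance of $C$ is at least $t+1$ iff every $t$ columns of $G$ are linearly independent; for a fixed $t$-subset, the standard column-by-column argument gives probability of dependence at most $O(p^{-(t+1)})$, and a union bound over the $\binom{m}{t} \leq (em/t)^t$ subsets yields failure probability $O((em/t)^t / p^{t+1}) = o(1)$, using $p > (me/t)^3$. Independently, I would sample $b \in \bF_p^m$ uniformly; for any fixed $v \in C$ and any fixed set of $3t+1$ coordinates, $b$ and $v$ agree on all of them with probability $p^{-(3t+1)}$, so union bounding over the $\binom{m}{3t+1} \leq (em/(3t+1))^{3t+1}$ coordinate sets and the $|C| \leq p^{2t}$ codewords gives failure probability at most $(em/t)^{3t+1}/p^{t+1}$, again $o(1)$ by the same choice of $p$. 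This second bound is the tighter of the two and is what dictates the cubic factor.

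For the algorithm, the prime $p$ is found by exhaustive search over the arithmetic progression $\{1 + k n : k \geq 1\}$ in $\poly(n+m)$ time. The dual distance property is verified by Gaussian elimination on each of the $\binom{m}{t} = (m/t)^{O(t)}$ column subsets of $G$, and the distance condition on $b$ is verified by enumerating all $|C| \leq p^{2t} = (n+m/t)^{O(t)}$ codewords and checking Hamming distance to $b$ in $O(m)$ time each. Since both random events succeed with constant probability, $O(1)$ restarts suffice in expectation, giving total expected runtime $(nm/t)^{O(t)}$. The construction is essentially routine; the only real subtlety is calibrating the exponent $3$ in $(me/t)^3$ so that the prime supplied by Linnik's theorem is simultaneously large enough for both union bounds — there is no deeper obstacle.
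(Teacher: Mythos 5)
Your proposal is correct and follows essentially the same route as the paper: a Linnik prime with the exponent $3$ calibrated exactly as in the paper, a uniformly random $2t\times m$ generator matrix whose column independence gives dual distance $t+1$ (the paper packages this as its Gilbert--Varshamov lemma, with the union bound taken over low-weight kernel vectors rather than over column subsets, but the argument is the same), a uniformly random $b$ handled by the same union bound over agreement sets and codewords, and brute-force verification with $O(1)$ expected restarts. No substantive difference from the paper's proof.
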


The code $C$ and vector $b$ are then provided as input to \Algorithm{BuildOA2}. 
Applying \Lemma{BuildOAWorks}, we obtain an $[s,m,n,t]$ orthogonal array with $s \leq n^{3t} p^{2t} = (mn/t)^{O(t)}$.
The expected runtime is $(mn/t)^{O(t)}$.

The proof of \Lemma{MainOptimal} requires the following lemma. Here, the matrix $M$ is simply the generator matrix for the dual of a code meeting the Gilbert–Varshamov bound.
A non-algorithmic form of this lemma may be found in
\cite[Theorem 10.11]{StinsonBook}.

\begin{lemma}
\LemmaName{AlgGV}
Let $m, t \in \bN$ and let $p$ be a prime power.
Assume that
\begin{equation}
\EquationName{GVCondition}
\sum_{i=1}^t \binom{m}{i} (p-1)^i ~\leq~ p^{\ell}/4.
\end{equation}
Then there exists a randomized algorithm that, with probability at least $3/4$, outputs an $\ell \times m$ matrix $M$ such that any $t$ columns of $M$ are linearly independent.
\end{lemma}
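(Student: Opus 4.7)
The plan is to use the classical probabilistic Gilbert--Varshamov argument, but framed algorithmically. The randomized algorithm will simply sample the $\ell \times m$ matrix $M$ entrywise, with each entry drawn independently and uniformly from $\bF_p$. This takes $O(\ell m)$ field operations, and the content of the lemma is that this naive sampler succeeds with probability at least $3/4$ under the hypothesis \eqref{eq:GVCondition}.

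First I would reformulate the success condition. The statement that every set of $t$ columns of $M$ is linearly independent is equivalent to the statement that $Mc \neq 0$ for every nonzero $c \in \bF_p^m$ of Hamming weight at most $t$. So the failure event is
\[
\mathcal{F} ~=~ \bigcup_{\substack{c \in \bF_p^m \setminus \{0\} \\ \mathrm{wt}(c) \leq t}} \{ Mc = 0 \},
\]
and the goal is to show $\Pr[\mathcal{F}] \leq 1/4$.

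Next I would apply a union bound together with a pointwise probability estimate. For any \emph{fixed} nonzero $c$, the entries of the vector $Mc \in \bF_p^\ell$ are independent across the $\ell$ rows of $M$, and within each row the entry is a $\bF_p$-linear combination of iid uniform variables with at least one nonzero coefficient, hence itself uniform on $\bF_p$. Therefore $\Pr[Mc = 0] = p^{-\ell}$. Counting the nonzero $c$ of weight exactly $i$ as $\binom{m}{i}(p-1)^i$, the union bound gives
\[
\Pr[\mathcal{F}] ~\leq~ \sum_{i=1}^{t} \binom{m}{i}(p-1)^i \cdot p^{-\ell} ~\leq~ \frac{1}{4},
\]
where the final inequality is exactly the hypothesis \eqref{eq:GVCondition}. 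This proves the claimed success probability.

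I do not expect any real obstacle here: the only subtlety is the pointwise claim $\Pr[Mc=0] = p^{-\ell}$, which hinges on the fact that a uniform linear combination over $\bF_p$ with at least one nonzero coefficient is itself uniform, and this is standard. I would also briefly note that the algorithm can, if desired, be boosted to arbitrarily high success probability by repetition together with a verification step (e.g., Gaussian elimination over $\bF_p$ applied to every $t$-subset of columns, or equivalently computing the minimum distance of the code generated by the rows of $M$), but the statement as written asks only for $3/4$, which the raw sampler already achieves. The runtime is dominated by generating $\ell m$ uniform elements of $\bF_p$, which is $\poly(\ell, m, \log p)$.
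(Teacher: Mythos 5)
Your proposal is correct and matches the paper's proof essentially verbatim: sample $M$ uniformly at random, reduce the failure event to the union over nonzero vectors of support size at most $t$ of the event $Mv=0$, bound each such event by $p^{-\ell}$, and apply the union bound so that the hypothesis \eqref{eq:GVCondition} gives failure probability at most $1/4$. The extra details you supply (why $\prob{Mv=0}=p^{-\ell}$, and the optional verification/boosting step) are fine but not needed for the statement as written.
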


The proof is classical, but for completeness we include it below.

\begin{proofof}{\Lemma{MainOptimal}}
Define
\[ \ell = 2t \qquad\text{and}\qquad s = 3t. \]
Let $p$ be a prime satisfying both
\[
	p \equiv 1 \pmod n \qquad\text{ and }\qquad \paren{me/t} ^ 3 < p. \label{p_require}
\]
By \Corollary{Linnik}, there exists such a $p$ with 
\begin{equation}
\EquationName{BoundOnP}
p ~\leq~ c_L \big(n + (me/t)^3 \big)^L.
\end{equation}
We can find $p$ deterministically with runtime $(cnm/t)^c$
for some constant $c$.

Next we claim that \eqref{eq:GVCondition} is satisfied.
Using $t \leq (m-1)/2$ and $p > 2$, the LHS is at most 
$t \binom{m}{t} p^t$.
Dividing both sides by $p^\ell$, we get
\begin{align*}
    t \binom{m}{t} p^{t-\ell}
    &~\leq~ t (m e / t)^t p^{-t}
        \qquad\text{(using $\ell = 2t$)} \\
    &~<~ t (m e/t)^{-2t}
        \qquad\text{(using the lower bound on $p$)} \\
    &~<~ t e^{-4t} ~\leq~ 1/4e,
\end{align*}
using that $m \geq e \cdot t$.
Since this is less than $1/4$, \eqref{eq:GVCondition} is satisfied.

We will repeatedly use \Lemma{AlgGV} to generate a matrix $M$
until all subsets of $t$ columns are linearly independent.
This takes time $O(\binom{m}{t} (mt)^c)$ for some constant $c$.
Let $C$ be the list of vectors in $\bF_p^m$ obtained by taking all linear combinations of the rows of $M$.
Then $C$ is a linear code of dimension at most $\ell$ and dual distance at least $t+1$;
see, e.g., \cite[Theorem 10.4]{StinsonBook}.
The time to construct $C$ is $O(\card{C} m \ell)$.

Lastly, we will repeatedly generate $b \in \bF_p^m$ uniformly at random until $\Delta(b,v) \geq m-s ~\forall v \in C$.
For any $I \in \binom{[m]}{s}$, let $E_I$ be the event that
there exists $v$ for which $v_i=b_i ~\forall i \in I$.
We will show that $\prob{ \union_I E_I } < 1$.
Clearly
\begin{align*}
\prob{\cE_I}
    ~\leq~ \card{C} \cdot p^{-s}
    ~\leq~ p^{\ell-s}
    ~=~ p^{-t}.
\end{align*}
Thus, by a union bound,
\begin{align*}
\prob{ \union_I E_I }
    &~\leq~ \binom{m}{s} p^{-t} \\
    &~\leq~ (m e / s)^s p^{-t} \\
    &~=~ 3^{-3t} (m e / t)^{3t} p^{-t} \\
    &~<~ 3^{-3t} ~<~ 1,
\end{align*}
by the definition $s = 3t$ and the lower bound $p > (m e / t)^3$.
Thus, the expected number of trials until generating $b$ is a constant. Each trial can be executed in time $\card{C} m$.
Overall, the expected runtime is $(mn/t)^{O(t)}$.
\end{proofof}

\begin{proofof}{\Lemma{AlgGV}}
Let $M$ be a uniformly random matrix of size $\ell \times m$.
Let $V \subseteq \bF_p^m$ be the non-zero vectors of support size at most $t$.
For $v \in V$, let $E_v$ be the event that $M v=0$.
Then $\prob{E_v} = p^{-\ell}$, so
\[ \prob{ \Union_{v \in V} E_v }
 ~\leq~ \card{V} p^{-\ell}
 ~=~ \sum_{i=1}^t \binom{m}{i} (p-1)^i p^{-\ell}. \]
By the lemma's hypothesis, this is at most $1/4$.
If this event does not occur then every linear dependence among
the columns of $M$ involves more than $t$ columns.
\end{proofof}

\section{Open Questions}

There are several open questions related to this work.

\begin{enumerate}
\item The orthogonal array constructed by \Theorem{Main} will have duplicate rows, in general.
(Similarly, the hash family $\cH$ constructed by \Theorem{MainHash} will, in general, be a multiset.)
Can it be modified to eliminate the duplicate rows?

\item In \Theorem{BasicHash}, the expected runtime of  \Cnstr is $O\big(t+(n+m)^\eps\big) ~\forall\eps>0$.
Can it be improved to $O(t)+\polylog(n+m)$, without assuming the generalized Riemann hypothesis?

\item The hash function construction of \Section{Hash} was based on the Reed-Solomon construction of \Section{RS}.
Can the construction of \Section{AG} be used instead?
Although there are algorithms to construct the generator matrix in time $\poly(m)$ (see \Theorem{KTV}),
we would like to evaluate the hash function in time $\poly(t)$ while using only space $O(t \log(n+m))$.
At present we are unaware of an AG code construction that can be used for this purpose.

\item Although there are explicitly known values of $L$ for which Linnik's theorem (\Theorem{Linnik}) holds, at present it seems that there is no explicitly known pair $(L,c_L)$ such that the theorem holds.
\end{enumerate}

\section*{Acknowledgements}

We thank Jan Vondrak for suggesting this problem to us and for some initial discussions. We thank Lior Silberman and Greg Martin for very helpful discussions on analytic number theory.

\bibliographystyle{plain} 
\bibliography{OA}

\begin{thebibliography}{10}

\bibitem{Bush}
K.~A. Bush.
\newblock A generalization of a theorem due to {M}ac{N}eish.
\newblock {\em The Annals of Mathematical Statistics}, pages 293--295, 1952.

\bibitem{BushPolynomials}
K.~A. Bush.
\newblock Orthogonal arrays of index unity.
\newblock {\em Ann. Math. Statist.}, 23:426--434, 1952.

\bibitem{CW79}
J.~Lawrence Carter and Mark~N. Wegman.
\newblock Universal classes of hash functions.
\newblock {\em Journal of Computer and System Sciences}, 18:143--154, 1979.

\bibitem{CL21}
Kuan Cheng and Xin Li.
\newblock Efficient document exchange and error correcting codes with
  asymmetric information.
\newblock In {\em Proceedings of the ACM-SIAM Symposium on Discrete Algorithms
  (SODA)}, 2021.

\bibitem{Chor}
Benny Chor, Oded Goldreich, Johan Hastad, Joel Friedman, Steven Rudich, and
  Roman Smolensky.
\newblock The bit extraction problem or $t$-resilient functions.
\newblock In {\em Proceedings of the IEEE Symposium of Foundations of Computer
  Science}, 1985.

\bibitem{Kane}
Jeffrey~S Cohen and Daniel~M Kane.
\newblock Bounds on the independence required for cuckoo hashing, 2009.
\newblock Manuscript.

\bibitem{dobzinski2023fairness}
Shahar Dobzinski, Sigal Oren, and Jan Vondrak.
\newblock Fairness and incentive compatibility via percentage fees.
\newblock In {\em Proceedings of the 24th ACM Conference on Economics and
  Computation}, pages 517--535, 2023.

\bibitem{garcia1995tower}
Arnaldo Garcia and Henning Stichtenoth.
\newblock A tower of {A}rtin-{S}chreier extensions of function fields attaining
  the {D}rinfeld-{V}ladut bound.
\newblock {\em Inventiones mathematicae}, 121(1):211--222, 1995.

\bibitem{Goppa}
Valerii~Denisovich Goppa.
\newblock Algebraico-geometric codes.
\newblock {\em Izvestiya Rossiiskoi Akademii Nauk. Seriya Matematicheskaya},
  46(4):762--781, 1982.

\bibitem{GRS}
Venkat Guruswami, Atri Rudra, and Madhu Sudan.
\newblock {\em Essential Coding Theory}.
\newblock 2018.
\newblock Manuscript.

\bibitem{HeathBrown}
D.~R. Heath-Brown.
\newblock Zero-free regions for {D}irichlet {L}-functions, and the least prime
  in an arithmetic progression.
\newblock {\em Proceedings of the London Mathematical Society}, 3(64):265--338,
  1992.

\bibitem{Hedayat}
A.~S. Hedayat, N.~J.~A. Sloane, and John Stufken.
\newblock {\em Orthogonal Arrays: Theory and Applications}.
\newblock Springer, 1999.

\bibitem{Indyk}
Piotr Indyk.
\newblock A small approximately min-wise independent family of hash functions.
\newblock {\em Journal of Algorithms}, 38(1):84--90, 2001.

\bibitem{IK04}
Henryk Iwaniec and Emmanuel Kowalski.
\newblock {\em Analytic number theory}.
\newblock American Mathematical Society, 2004.

\bibitem{KNW}
Daniel~M Kane, Jelani Nelson, and David~P Woodruff.
\newblock An optimal algorithm for the distinct elements problem.
\newblock In {\em Proceedings of the 29th Symposium on Principles of Database
  Systems (PODS)}, pages 41--52, 2010.

\bibitem{karp1985fast}
Richard~M. Karp and Avi Wigderson.
\newblock A fast parallel algorithm for the maximal independent set problem.
\newblock {\em Journal of the ACM (JACM)}, 32(4):762--773, 1985.

\bibitem{KTV}
G.~Katsman, M.~Tsfasman, and S.~Vl\u{a}du\c{t}.
\newblock Modular curves and codes with a polynomial construction.
\newblock {\em IEEE Transactions on Information Theory}, 30(2):353--355, 1984.

\bibitem{KLP12}
Greg Kuperberg, Shachar Lovett, and Ron Peled.
\newblock Probabilistic existence of rigid combinatorial structures.
\newblock In {\em Proceedings of the forty-fourth annual ACM symposium on
  Theory of computing}, pages 1091--1106, 2012.

\bibitem{KLP17}
Greg Kuperberg, Shachar Lovett, and Ron Peled.
\newblock Probabilistic existence of regular combinatorial structures.
\newblock {\em Geometric and Functional Analysis}, 27(4):919--972, 2017.

\bibitem{Lamzouri}
Youness Lamzouri, Xiannan Li, and Kannan Soundararajan.
\newblock Conditional bounds for the least quadratic non-residue and related
  problems.
\newblock {\em Mathematics of Computation}, 84(295):2391--2412, 2015.

\bibitem{Lorenzini}
Dino Lorenzini.
\newblock {\em An Invitation to Arithmetic Geometry}.
\newblock American Mathematical Society, 1996.

\bibitem{MV06}
Hugh~L. Montgomery and Robert~C. Vaughan.
\newblock {\em Multiplicative Number Theory I}, chapter~11.
\newblock Cambridge University Press, 2006.

\bibitem{NX}
Harald Niederreiter and Chaoping Xing.
\newblock {\em Rational Points on Curves over Finite Fields: Theory and
  Applications}.
\newblock Cambridge University Press, 2001.

\bibitem{Rao1947}
C.~R. Rao.
\newblock Factorial experiments derivable from combinatorial arrangements of
  arrays.
\newblock {\em Supplement to the Journal of the Royal Statistical Society},
  9(1):128--139, 1947.

\bibitem{Roth}
Ron Roth.
\newblock {\em Introduction to Coding Theory}.
\newblock Cambridge University Press, 2006.

\bibitem{Sig35}
Carl Siegel.
\newblock Über die classenzahl quadratischer zahlkörper.
\newblock {\em Acta Arithmetica}, (1):83--86, 1935.

\bibitem{Stinson94}
D.~R. Stinson.
\newblock Combinatorial techniques for universal hashing.
\newblock {\em Journal of Computer and System Sciences}, 48:337--346, 1994.

\bibitem{StinsonSurvey}
D.~R. Stinson.
\newblock On the connections between universal hashing, combinatorial designs
  and error-correcting codes.
\newblock {\em Congressus Numerantium}, pages 7--28, 1996.

\bibitem{StinsonLeftover}
D.~R. Stinson.
\newblock Universal hash families and the leftover hash lemma, and applications
  to cryptography and computing.
\newblock {\em J. Combin. Math. Combin. Comput.}, 42:3--31, 2002.

\bibitem{StinsonBook}
D.~R. Stinson.
\newblock {\em Combinatorial Designs: Constructions and Analysis}.
\newblock Springer, 2004.

\bibitem{TVN}
Michael Tsfasman, Serge Vl\u{a}du\c{t}, and Dmitry Nogin.
\newblock {\em Algebraic Geometric Codes: Basic Notions}.
\newblock American Mathematical Society, 2007.

\bibitem{TVN2}
Michael Tsfasman, Serge Vl\u{a}du\c{t}, and Dmitry Nogin.
\newblock {\em Algebraic Geometric Codes: Advanced Chapters}.
\newblock American Mathematical Society, 2019.

\bibitem{tsfasman1982modular}
Michael~A. Tsfasman, S.~G. Vl\u{a}du\c{t}, and Th. Zink.
\newblock Modular curves, {S}himura curves, and {G}oppa codes, better than
  {V}arshamov-{G}ilbert bound.
\newblock {\em Mathematische Nachrichten}, 109(1):21--28, 1982.

\bibitem{WC81}
Mark~N. Wegman and J.~Larence Carter.
\newblock New hash functions and their use in authentication and set equality.
\newblock {\em Journal of Computer and System Sciences}, 22:265--279, 1981.

\bibitem{Xylouris}
Triantafyllos Xylouris.
\newblock On {L}innik's constant.
\newblock {\em Acta Arithmetica}, 150(1):65--91, 2011.

\bibitem{XylourisPhD}
Triantafyllos Xylouris.
\newblock {\em \"{U}ber die Nullstellen der Dirichletschen {L}-Funktionen und
  die kleinste Primzahl in einer arithmetischen Progression}.
\newblock PhD thesis, 2011.

\end{thebibliography}

\end{document}